\newif\ifconf\conffalse
\newtheorem{theorem}{Theorem}
\newtheorem{lemma}[theorem]{Lemma}
\newtheorem{claim}[theorem]{Claim}
\newtheorem{definition}[theorem]{Definition}
\let\doendproof\endproof
\renewcommand\endproof{~\hfill\qed\doendproof}
\newcommand{\E}{\mathop{\mathbb E}\displaylimits}
\renewcommand{\vec}[1]{\mathbf{#1}}
\providecommand{\@fourthoffour}[4]{#4}
\def\fixstatement#1{%
  \AtEndEnvironment{#1}{%
    \xdef\pat@label{\expandafter\expandafter\expandafter
      \@fourthoffour\csname#1\endcsname\space\@currentlabel}}}
\globtoksblk\prooftoks{1000}
\newcounter{proofcount}
\long\def\proofatend#1\endproofatend{%
\ifconf
  \edef\next{\noexpand\begin{proof}[of Claim \pat@label]}%
\else
  \edef\next{\noexpand\section{Proof of \pat@label}}%
\fi  
  \toks\numexpr\prooftoks+\value{proofcount}\relax=\expandafter{\next#1}
  \stepcounter{proofcount}}
\def\printproofs{%
  \count@=\z@
  \loop
    \the\toks\numexpr\prooftoks+\count@\relax
    \ifnum\count@<\value{proofcount}%
    \advance\count@\@ne
  \repeat}
\newcommand{\eps}{\epsilon}
\newcommand{\R}{\mathbb{R}}
\newcommand{\bound}{0.5664} 
\begin{document}
\title{Online Bipartite Matching with Decomposable Weights}
\author{
Moses Charikar\thanks{Princeton University, USA, \texttt{moses@cs.princeton.edu}}
\and
Monika Henzinger\thanks{University of Vienna, Austria, \texttt{monika.henzinger@univie.ac.at}}
\and
Huy L. Nguy\~{\^{e}}n\thanks{Princeton University, USA, \texttt{hlnguyen@cs.princeton.edu}}
}

\maketitle

\begin{abstract}
We study a weighted online bipartite matching problem:
$G(V_1, V_2, E)$ is a weighted bipartite graph where $V_1$ is
known beforehand and the vertices of $V_2$ arrive online.
The goal  is to match vertices of $V_2$ as they arrive to vertices in $V_1$,
so as to maximize the sum of weights of edges in the matching.
If assignments to $V_1$ cannot be changed, no bounded competitive ratio is achievable.
We study the weighted online matching problem with {\em free disposal}, where vertices
in $V_1$ can be assigned multiple times, 
but only get credit for the maximum weight edge assigned to them over the course of
the algorithm.
For this problem, the greedy algorithm is $0.5$-competitive
and determining whether a better competitive ratio is achievable is a well known open problem.

We identify an interesting special case where the edge weights are decomposable as the
product of two factors, one corresponding to each end point of the edge.
This is analogous to the well studied related machines model in the
scheduling literature, although the objective functions are
different.
For this case of decomposable edge weights, we design a \bound\  competitive
randomized algorithm in complete bipartite graphs.
We show that such instances with decomposable weights are non-trivial by establishing
upper bounds of 0.618 for deterministic and $0.8$ for randomized
algorithms.

A tight competitive ratio of $1-1/e \approx 0.632$ was known previously for both the 0-1 case as well as 
the case where edge weights depend on the offline vertices only, but for these cases, reassignments cannot
change the quality of the solution.
Beating 0.5 for weighted matching where reassignments are necessary has been a significant challenge.
We thus give the first online algorithm with competitive ratio strictly better than 0.5 for a non-trivial
case of weighted matching with free disposal.

\end{abstract}

\section{Introduction}

In recent years, online bipartite matching problems have been intensely studied.
Matching itself is a fundamental optimization problem with several applications,
such as matching medical students to residency programs, matching men and women, matching packets to outgoing links in a router and so on.
There is a rich body of work on matching problems, yet there are basic problems we don't 
understand and we study one such question in this work.
The study of the online setting goes back to the seminal work of Karp, Vazirani and Vazirani ~\cite{KVV90}
who gave an optimal $1-1/e$ competitive algorithm for the unweighted case.
Here $G(V_1, V_2, E)$ is a bipartite graph where $V_1$ is
known beforehand and the vertices of $V_2$ arrive online.
The goal of the algorithm is to match vertices of $V_2$ as they arrive to vertices in $V_1$,
so as to maximize the size of the matching.

In the weighted case, edges have weights and the goal is to maximize the sum of weights of edges in
the matching.
In the application of assigning ad impressions to advertisers in display advertisement, the weights could represent 
the (expected) value  of an ad impression to an advertiser and the objective function for the maximum matching problem encodes the goal of assigning ad impressions to advertisers to as to maximize total value.
If assignments to $V_1$ cannot be changed and if edge weights depend on the {\em online} node to which they are adjacent, it is easy to see that no competitive ratio bounded away 
from 0 is achievable.

Feldman et al \cite{FKMMP09} introduced the {\em free disposal} setting for weighted matching, where
vertices in $V_1$ can be assigned multiple times, 
but only get credit for the maximum weight edge assigned to them over the course of
the algorithm.
(On the other hand, a vertex in $V_2$ can only be assigned at the time that it arrives with no later
reassignments permitted).
\cite{FKMMP09} argues that this is a realistic model for assigning ad impressions to advertisers.
The greedy algorithm is $0.5$ competitive for the online weighted matching problem with free disposal.
They study the weighted matching problem with capacities -- here each vertex
$v \in V_1$ is associated with a capacity $n(v)$ and gets credit for the largest $n(v)$ edge weights from
vertices in $V_2$ assigned to $v$.
They designed an algorithm with competitive ratio approaching $1-1/e$ as the capacities approach infinity.
Specifically, if all capacities are at least $k$, their algorithm gets competitive ratio $1-1/e_k$ where $e_k = (1+1/k)^k$.
If all capacities are 1, their algorithm is $1/2$-competitive.

Aggarwal et al~\cite{AGKM11} considered the online weighted bipartite matching problem where edge weights 
are only dependent on the end point in $V_1$, i.e. each vertex $v \in V_1$ has a weight $w(v)$ and the 
weight of all edges incident on $v$ is $w(v)$.
This is called the {\em vertex weighted setting}. They designed a $1-1/e$ competitive algorithm.
Their algorithm can be viewed as a generalization of the Ranking algorithm of \cite{KVV90}.

It is remarkable that some basic questions about a fundamental problem such as matching are still open in the online setting.  
Our work is motivated by the following tantalizing open problem;
{\em Is it possible to achieve a competitive ratio better than $0.5$ for weighted online matching ?}
Currently no upper bound better than $1-1/e$ is known for the setting of general weights -- in fact this bound 
holds even for the setting of 0-1 weights.
On the other hand, no algorithm with competitive ratio better than 0.5 (achieved by the greedy algorithm) is known for this problem.
By the results of \cite{FKMMP09}, the case where the capacities are all 1 seems to be the hardest case and this is what
we focus on. 

\subsection{Our results}

We identify an interesting special case of this problem where we have a complete graph between $V_1$ and $V_2$
and the edge weights are decomposable as the
product of two factors, one corresponding to each end point of the edge.
This is analogous to the well studied related machines model in the
scheduling literature \cite{AAFPW97,A98survey,BCK00,ES00}
where the load of a job of size $p$ on a machine of speed $s$ is $p/s$
although the objective functions are different.
Scheduling problems typically involving minimizing the maximum machine load (makespan) or minimizing the
$\ell_p$ norm of machine loads, where the load on a machine is the sum of loads of all jobs placed on the machine.
By contrast, in 
the problem we study,
the objective 
(phrased in machine scheduling terminology) is to maximize the sum of machine loads where the load of a
machine is the load of the largest job placed on the machine.
For this case of decomposable edge weights, we design a \bound\   competitive
algorithm (Section~\ref{sec:randalg}).
For display advertisement using a complete graph models the setting of a specific market segment (such as impressions for males between 20 and 30), where every advertiser is interested in every impression. The weight factor of the offline node $u$ can model the value that a click has for advertiser $u$, the weight factor of the online node $v$ can model the
clickthrough probability of the user to which impression $v$ is shown. Thus, the maximum weight matching in the setting we study corresponds to maximizing the sum of the expected values of all advertisers.

Our algorithm uses a now standard {\em randomized doubling technique} \cite{BN70,Gal80,CPSSSW,GK98};
however the analysis is novel and non-trivial.
We perform a recursive analysis where each step proceeds as follows:
We lower bound the profit that the algorithm derives from the fastest machine (i.e. the load of the largest job placed on it) relative to the
difference between two optimum solutions - one corresponding to the original instance and the other corresponding to a modified
instance obtained by removing this machine and all the jobs assigned to it.
This is somewhat reminiscent of, but different from the local ratio technique used to design approximation algorithms.
Finally, to exploit the randomness used by the algorithm we need to establish several structural properties of the worst case 
sequence of jobs -- this is a departure from previous applications of this {\em randomized doubling technique}.
While all previous online matching algorithms were analyzed using a {\em local}, step-by-step analysis, we use a 
{\em global} technique, i.e. we reason about the entire sequence of jobs at once. This might be useful for solving the case of online weighted matching for 
{\em general} weights.
The algorithm and analysis is presented in Section~\ref{sec:randalg} and
an outline of the analysis is presented in Section~\ref{sec:outline}.

A priori, it may seem that the setting of decomposable weights ought to be a much easier case of weighted online matching since it does not
capture the well studied setting of 0-1 weights.
We show that such instances with decomposable weights are non-trivial by establishing
an upper bound of $(\sqrt{5}-1)/2 \approx 0.618$ on the competitive ratios of deterministic algorithms (Section~\ref{sec:detlb})
and an upper bound of 0.8 on the competitive ratio of randomized algorithms (Section~\ref{sec:randlb}).
The deterministic upper bound constructs a sequence of jobs that is the solution to a certain recurrence relation.
Crucial to the success of this approach is a delicate choice of parameters to ensure that the solution of the recurrence
is oscillatory (i.e. the roots are complex).
In contrast to the setting with capacities, for which   a deterministic algorithm with competitive ratio
approaching $1-1/e \approx 0.632$ exists~\cite{FKMMP09}, our upper bound of ($\sqrt{5}-1)/2 < 1-1/e$ for deterministic algorithms shows that 
no such competitive ratio can be achieved for the decomposable case with unit capacities.
Note that the upper bound of $1-1/e$ for the unweighted case~\cite{KVV90} is for randomized algorithms and
does not apply to the setting of decomposable weights that we study here.

In contrast to the vertex weighted setting (and the special case of 0-1 weights) where reassignments
to vertices in $V_1$ cannot improve the quality of the solution, any algorithm for the decomposable 
weight setting must necessarily exploit reassignments in order to achieve a competitive ratio bounded 
away from 0.
For this class of instances, we give an upper bound approaching 0.5 for the competitive ratio of the greedy algorithm. This shows 
that for decomposable weights greedy's  performance cannot be better than  for general weights, where it is 0.5-competitive (Section~\ref{sec:greedy}).

\subsection{Related work}
Goel and Mehta~\cite{GM08} and Birnbaum and Mathieu~\cite{BM08} simplified the analysis of the Ranking algorithm considerably.
Devanur et al~\cite{DJK13} recently gave an elegant randomized primal-dual interpretation of
\cite{KVV90}; their framework also applies to the generalization to the vertex weighted setting by \cite{AGKM11}. 
Haeupler et al~\cite{HMZ11} studied online weighted matching in the stochastic setting where vertices 
from $V_2$ are drawn from a known distribution. 
The stochastic setting had been previously studied in the context of unweighted bipartite matching in a sequence
of papers \cite{FMMM09,MGS11}.
Recent work has also studied the random arrival model (for unweighted matching) where the order of arrival 
of vertices in $V_2$ is assumed to be a random permutation: In this setting, Karande, at al~\cite{KMT11} and 
Mahdian and Yan \cite{MY11}  showed that the Ranking 
algorithm of \cite{KVV90} achieves a competitive ratio better than $1-1/e$.
A couple of recent papers analyze the performance of a randomized greedy algorithm and an analog of the Ranking algorithm for matching 
in general graphs \cite{PS12,GT12}.
Another recent paper introduces a stochastic model for online matching where the goal is to maximize the number of 
successful assignments (where success is governed by a stochastic process)
\cite{MP12}.

A related model allowing cancellation of previously accepted  online nodes was studied in \cite{CFMP09,BHK09,AK09} and optimal deterministic and randomized algorithms were given. In their setting the weight of an edge depends {\em only} on the online node. Additionally in their model they decide in an online fashion only which online nodes to accept, {\em not} how to match these nodes to offline nodes. If a previously accepted node is later rejected, a non-negative cost is incurred. 
Since the actual matching is only determined after all online nodes have been seen, their model is very different from ours: 
Even if the cost of rejection of a previously accepted node is set to 0, the key difference is that they do not commit to a matching
at every step and the intended matching can change dramatically from step to step.
Thus, it does {\em not} solve the problem that we are studying.

A related problem that has been studied is online matching with preemption \cite{M05,A11,ELSW12}.
Here, the edges of a graph arrive online and the algorithm is required to maintain a subset of edges that form a matching.
Previously selected edges can be rejected (preempted) in favor of newly arrived edges.
This problem differs from the problem we study in two ways: (1) the graph is not necessarily bipartite, and
(2) edges arrive one by one. In our (classic) case, vertices arrives online and all incident edges to a newly arrived vertex $v$
are revealed when $v$ arrives.

Another generalization of online bipartite matching is the Adwords problem \cite{MSVV05,DH09}.
In addition, several online packing problems have been studied with applications to the Adwords and Display Advertisement problem
\cite{BJN07,GM08,AWY09}.
\subsection{Notation and preliminaries}

We consider the following variant of the online bipartite matching problem.
 The input is a complete bipartite graph $G = (V_1 \cup V_2, V_1 \times V_2)$ along with two weight functions $s:V_1\rightarrow \R_+$ and $w:V_2\rightarrow \R_+$. 
The weight of each edge $e=(u, v)$ is the product $s(u)\cdot w(v)$. 
At the beginning, only $s$ is given to the algorithm.
 Then, the vertices of $V_2$ arrive one by one. 
When a new vertex $v$ arrives, $w(v)$ is revealed and the algorithm has to match it to a vertex in $V_1$. At the end, the reward of each vertex $u\in V_1$ is the maximum weight assigned to $u$ times $s(u)$. The goal of the algorithm is to maximize the sum of the rewards.
To simplify the presentation we will call vertices of $V_1$ {\em machines} and vertices of $V_2$ {\em jobs}. The $s$-value of a machine $u$ will be called the
{\em speed} of the machines and the $w$-value of a job $v$ is called the {\em size} of the job. Thus, the goal of the online algorithm is to assign jobs to machines.
However, we are not studying the ``classic'' variant of the problem since we are using a different optimization function, motivated by display advertisements.

\section{Upper bound for the greedy algorithm}\label{sec:greedy}

We begin by addressing an obvious question, which is how well a greedy approach would solve our problem, and using the proof to provide some intuition for our algorithm in the next section. We analyze here the following simple  greedy algorithm:  When a job $v$ arrives,
the algorithm computes for every machine $u$ the difference between the weight of $(u,v)$ and the weight $(u,v')$, where $v'$ is the job currently assigned to $u$.
If this difference is positive for at least one machine, the job is assigned to a machine with maximum difference.

\begin{theorem}
  The competitive ratio of the greedy algorithm is at most $\frac{1}{2-\eps}$ for any $\eps > 0$.
\end{theorem}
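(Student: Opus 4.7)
The plan is to construct, for every $\eps > 0$, an adversarial instance on which greedy achieves competitive ratio at most $\frac{1}{2-\eps}$, matching the known $1/2$ lower bound in the limit. The construction will consist of machines with carefully designed speeds together with an adversarial job sequence, both parameterized by $\eps$.

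The core idea is to exploit greedy's inability to undo past assignments. By sending small jobs early that commit to the fast machine(s), followed by a cascade of larger jobs, I would force greedy into one of two bad behaviors at each arrival: either (a) place the new larger job on a fresh slow machine while leaving a small job occupying the fast machine, or (b) displace the fast machine's current job and ``waste'' its prior contribution (since the fast machine only gets credit for its maximum). Either way, greedy's final assignment ends up close to the inverse of opt's sorted matching, and opt's sorted matching can be up to twice the value of the reverse one when the speeds and job sizes are balanced appropriately. Concretely, I would try an instance whose speeds are geometric in a base close to $1$ (or one fast machine alongside several slow ones), and jobs $w_k$ chosen so that greedy's local comparison $s_j\,w_k \geq s_i\,(w_k - w_i^{\text{curr}})$ is near-tight at each step, with adversarial tie-breaking directing greedy to the ``fresh'' option at every critical comparison.

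To execute this plan I would: (i) define the instance explicitly as a function of $\eps$; (ii) inductively verify each of greedy's step-by-step decisions using the max-improvement rule; (iii) sum greedy's contributions on each machine and compare against opt's sorted-matching value; (iv) take ratios and an appropriate limit as $\eps \to 0$. The main obstacle is that greedy is known to be at least $1/2$-competitive, so the construction must sit exactly at this tight boundary; a small perturbation in parameters or a different tie-breaking rule can make greedy pick the ``right'' machine at some step and yield a ratio far from $1/2$. Two machines alone are not enough (one can check the best ratio achievable is bounded away from $1/2$), so the construction will likely require several machines and many jobs to amplify the inversion effect, and the tie-breaking at the critical arrivals must be chosen adversarially in favor of the worse assignment.
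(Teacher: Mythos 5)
Your plan is the right one and is essentially the approach the paper takes: one fast machine plus many slow machines, a geometrically growing job sequence tuned so that greedy's marginal-gain comparison is an exact tie at every arrival, and adversarial tie-breaking. But as written this is a proof outline, not a proof: you never commit to an instance, so none of steps (ii)--(iv) of your own plan is actually carried out, and the whole difficulty of the statement lives in those steps (as you note yourself, greedy is $1/2$-competitive, so the parameters must sit exactly on the boundary). To close the gap you need the concrete choice. The paper's is: one machine $a$ of speed $1$, $t=1/\eps^2$ machines $b_j$ of speed $\eps/2$, and jobs $d_1,\dots,d_{t+1}$ with $w(d_i)=(1-\eps/2)^{-i}$. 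Then the gain of displacing $d_i$ by $d_{i+1}$ on $a$ is $\bigl(w(d_{i+1})-w(d_i)\bigr)\cdot 1=\tfrac{\eps}{2}(1-\eps/2)^{-i-1}$, which exactly equals the gain $w(d_{i+1})\cdot\tfrac{\eps}{2}$ of opening a fresh slow machine, so ties at every step let the adversary steer all jobs onto $a$; greedy then earns only $(1-\eps/2)^{-1-t}$ while OPT earns $(1-\eps/2)^{-1-t}+(1-\eps/2)^{-t}-1\ge(2-\eps)(1-\eps/2)^{-1-t}$.

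One further point: your tie-breaking is steered toward the ``fresh slow machine'' option, whereas the paper steers everything onto the fast machine. For the exact-tie instance above both choices happen to give greedy the same total (the geometric sum telescopes to the value of the largest job), but the fast-machine choice is the cleaner one to analyze, since greedy's value is then literally the single largest job and no summation over slow machines is needed. Also, your description of greedy's outcome as ``the inverse of opt's sorted matching'' does not match what actually happens in the tight instance --- greedy stacks every job on the fast machine and wastes all the slow ones --- so if you pursue the reverse-sorted-matching picture you will be analyzing a different (and less extremal) configuration.
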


\begin{proof}
  Consider the following instance. $V_1$ consists of a vertex $a$ with $s(a)=1$ and $t=1/\eps^2$ vertices $b_1, \ldots, b_t$ with $s(b_i) = \eps/2~\forall i$. $V_2$ consists of the following vertices arriving in the same order $d_1, \ldots, d_{1+t}$ where $w(d_i) = (1-\eps/2)^{-i}$. We will prove by induction that all vertices $d_i$ are assigned to $a$. When $d_1$ arrives, nothing is assigned so it is assigned to $a$. Assume that all the first $t$ vertices are assigned to $a$ when $d_{t+1}$ arrives. The gain by assigning $d_{i+1}$ to $a$ is $(w(d_{i+1})-w(d_i))s(a) = \eps(1-\eps/2)^{-i-1}/2$. The gain by assigning $d_{i+1}$ to some $b_j$ is $w(d_{i+1})s(b_j) = \eps(1-\eps/2)^{-i-1}/2$. Thus, the algorithm can assign $d_{i+1}$ to $a$. The total reward of the algorithm is $(1-\eps/2)^{-1-t}$. The optimal solution is to assign $d_{1+t}$ to $a$ and the rest to $b_i$'s, getting $(1-\eps/2)^{-1-t}+(1-\eps/2)^{-t} - 1 \ge (2-\eps)(1-\eps/2)^{-1-t}$. Thus, the competitive ratio is at most $\frac{1}{2-\eps}$.
\end{proof}

The instance used in the proof above suggests some of the complications an algorithm has to deal with in the setting of decomposable weights: in order to have competitive ratio bounded away from $0.5$, an online algorithm must necessarily place some jobs on the slow machines. 
In fact it is possible to design an algorithm with competitive ratio bounded away from $0.5$ for the specific set of machines used in this proof 
(for any sequence of jobs). 
The idea is to ensure that a job is placed on the fast machine only if its size is larger than $(1+\gamma)$ times the size of the largest job currently on the fast machine (for an appropriately chosen parameter $\gamma$).
Such a strategy works for any set of machines consisting of one fast machine and several slow machines of the same speed.
However, we do not know how to generalize this approach to an arbitrary set of machines.
Still, this strategy (i.e. ensuring that jobs placed on a machine increase in size geometrically)
was one of the motivations behind the design of the randomized online algorithm to be presented next.
\section{Randomized algorithm} \label{sec:randalg}
We now describe our randomized algorithm which uses a parameter $c$ we will specify later:
The algorithm picks values $x_i \in (0,1]$ uniformly and at random,
independently for each machine $i$. 
Each job of weight $w$ considered by machine $i$ is placed in the unique interval $w\in (c^{k+x_i}, c^{k+1+x_i}]$ 
where $k$ ranges over all integers. 
When a new job $w$ arrives, the algorithm checks the machines in the order of decreasing speed (with ties broken in an arbitrary but fixed way). For 
machine $i$ it first determines the unique interval into which $w$ falls, which depends on its choice of $x_i$.
If the machine currently does not have a job in this or a bigger interval (with larger $k$), $w$ is assigned to $i$ and the algorithm stops, otherwise the algorithm checks the next machine.

The following function arises in our analysis:
\begin{definition}
Define $\displaystyle h(c) = 1-\frac{1}{\beta}W\left( \frac{\beta e^\beta}{c} \right)$\\
where $\beta =  \frac{c \ln(c)}{c-1}-1$ and $W()$ is the Lambert W function (i.e. inverse of $f(x)=x e^x$).
\end{definition}

\newcommand{\expression}{\min \left( \frac{c-1}{c \ln(c)}, h(c) \right)}

We will prove the following theorem:
\begin{theorem}
For $c \ge e$, the randomized algorithm has competitive ratio
$\expression$.
In particular, for $c=3.55829$, 
the randomized algorithm has a competitive ratio $\bound$.
\end{theorem}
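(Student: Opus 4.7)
My plan is to proceed by induction on the number of machines, peeling off the fastest machine $u_1$ at each step. The key enabling observation is that because the algorithm considers machines in decreasing order of speed, the set $S$ of jobs that the algorithm places on $u_1$ depends only on $x_1$ (and the input), and the subsequence of jobs ever reaching machines $u_2, \ldots, u_n$ is exactly $V_2 \setminus S$. Hence the algorithm's behavior on $u_2, \ldots, u_n$ when run on $I$ is identical to its behavior on the reduced instance $I'(x_1)$ obtained by deleting $u_1$ together with the jobs in $S$, and the inductive hypothesis applied to $I'(x_1)$ gives $\E_{x_2,\ldots,x_n}[\text{profit on } u_2,\ldots,u_n] \ge \alpha \cdot \mathrm{OPT}(I'(x_1))$. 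The theorem therefore reduces to establishing the one-machine inequality
$$\E_{x_1}[\text{profit on } u_1] \ \ge\ \alpha \cdot \E_{x_1}\!\left[\mathrm{OPT}(I) - \mathrm{OPT}(I'(x_1))\right], \qquad \alpha = \expression.$$

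The first branch of the min comes from a standard randomized-doubling calculation. Since $u_1$ is considered first for every arriving job, the last job ever placed on $u_1$ lies in the same $c$-interval of $u_1$ as $w_{\max} := \max_v w(v)$; writing $w_{\max} = c^a$, its size is at least $c^{\lfloor a - x_1 \rfloor + x_1}$. Integrating over $x_1 \sim U(0,1]$ yields
$$\E_{x_1}[\text{profit on } u_1] \ \ge\ s_1 w_{\max} \cdot \frac{c-1}{c \ln c}.$$
A short exchange argument on the sorted matching shows that $\mathrm{OPT}(I) - \mathrm{OPT}(I'(x_1)) \le s_1 w_{\max}$ whenever $S$ contains no job that the optimum on $I'(x_1)$ would use, so in this ``benign'' case the one-machine inequality already holds with $\alpha = \frac{c-1}{c\ln c}$.

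The delicate case is when $S$ overlaps substantially with the jobs used by the optimum on $I'(x_1)$, so that $\mathrm{OPT}(I) - \mathrm{OPT}(I'(x_1))$ strictly exceeds $s_1 w_{\max}$ and the simple bound above is too weak. Here I would invoke a global structural reduction of the worst-case job sequence: by perturbation and alignment arguments, push all job sizes onto the interval boundaries of $u_1$ and reduce $S$ to a single geometric chain $w_0, c w_0, c^2 w_0, \ldots, c^m w_0 = w_{\max}$. Under this reduction $\mathrm{OPT}(I) - \mathrm{OPT}(I'(x_1))$ becomes an explicit function of $x_1$, the chain length $m$, and the slower speeds $s_2, \ldots, s_n$, and maximizing the resulting worst-case ratio collapses to a single-variable optimization whose first-order conditions are exactly $\beta = \frac{c \ln c}{c-1} - 1$ together with a fixed-point equation that unwinds, via the Lambert $W$ function, to $h(c)$.

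The hard part will be this structural reduction: one must argue that the clean geometric-chain worst case really is worst case over arbitrary adversarial job sequences and arbitrary speed profiles on the remaining machines, even though neither the algorithm's intervals nor the sorted optimum matching respect a chain a priori. This is the ``global'' ingredient that the outline in Section~\ref{sec:outline} flags as the departure from prior randomized-doubling analyses; once both branches are in hand, combining them yields $\alpha = \expression$, and a direct numerical calculation at $c = 3.55829$ verifies that both $\frac{c-1}{c \ln c}$ and $h(c)$ exceed $\bound$, yielding the stated competitive ratio.
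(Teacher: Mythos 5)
Your recursive skeleton matches the paper's: peel off the fastest machine, note that the subinstance seen by machines $2,\ldots,n$ depends only on $x_1$, and reduce everything to the one-machine inequality $\E_{x_1}[A_1]\ge\alpha\,\E_{x_1}[OPT_1-OPT_2]$ (the paper's inequality (2), which telescopes to the theorem). Your ``benign'' branch is also essentially sound: when none of the displaced jobs $w_1,w_2,\ldots$ is used on a positive-speed machine by $OPT$, the paper's Lemma~\ref{lem:delta_i} gives $OPT_1-OPT_2\le c-(c-w)s(w)\le c$, and $\E[w]\ge\int_0^1 c^x\,dx=(c-1)/\ln c$ yields the first branch of the min.

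The genuine gap is the non-benign case, which you explicitly leave as an unproven ``structural reduction,'' and the reduction you sketch is not the one that works. You propose to collapse $S$ (the set of jobs on the fastest machine) to a geometric chain and then optimize over ``arbitrary speed profiles on the remaining machines'' in one shot. Two problems: first, $S$ is a random variable depending on $x_1$, so it cannot be normalized to a fixed chain; the paper instead works with the deterministic sequence of \emph{local maxima} of the job sequence and bounds $\E[\sum_k w_k]$ by the integral $B_S=\int_{-\infty}^0 m_S(c^x)\,dx$ before doing any exchange arguments. Second, and more importantly, the joint optimization over job sequences \emph{and} speed profiles does not collapse to a single-variable Lambert-$W$ fixed point; the step that makes it tractable is the paper's Lemma~\ref{lem:decompose-step-functions}, which writes the optimal speed profile as a convex combination of threshold step functions $s_t$. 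This splits the problem into exactly two cases: $t=0$ (all speeds equal, where the speed profile vanishes from the optimization and the geometric-chain argument plus the $W$-function calculation apply -- Lemma~\ref{lem:9}), and $t>0$ (0--1 speeds, where at most two jobs of $S$ can contribute to $\Delta_1$ and an elementary two-job analysis suffices -- Lemma~\ref{lem:c^t}). Note also that both branches of the min are needed \emph{within} the $t>0$ case (its Case~1 requires $\alpha\le h(c)$, its Case~2 requires $\alpha\le\frac{c-1}{c\ln c}$), so your attribution of one branch per case does not survive contact with the actual case analysis. Without the convex decomposition of the speed profile, the hard case of your argument remains open.
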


\subsection{Analysis Outline}\label{sec:outline}
We briefly outline the analysis strategy before describing the details.
An instance of the problem consists of a set of jobs and a set of machines.
The (offline) optimal solution to an instance is obtained by ordering machines from
fastest to slowest, ordering jobs from largest to smallest and assigning the
$i$th largest job to the $i$th fastest machine.
Say the machines are numbered $1,2, \ldots n$, from fastest to slowest.
Let $OPT_i$ denote the value of the optimal solution for the instance seen by the 
machines from $i$ onwards, i.e. the instance consisting of machines $i, i+1, \ldots n$, 
and the set of jobs passed by the $(i-1)$st machine to the $i$th machine in the online algorithm.
Then $OPT_1 = OPT$, the value of the optimal solution for the original instance.
Even though we defined $OPT_i$ to be the value of the optimal solution, we will sometimes
use $OPT_i$ to denote the optimal assignment, although the meaning will be clear from context.
Define $OPT_{n+1}$ to be 0.
For $2 \leq i \leq n$, $OPT_i$ is a random variable that depends on the random values
$x_{i'}$ picked by the algorithm for $i' < i$.
In the analysis, we will define random variables $\Delta_i$ such that $\Delta_i \geq OPT_i - OPT_{i+1}$
(see Lemma~\ref{lem:delta_i} later).
Let $A_i$ denote the profit of the online algorithm derived from machine $i$
(i.e. the size of the largest job assigned to machine $i$ times the speed of  machine $i$).
Let $A = \sum_{i=1}^n A_i$ be the value of the solution produced by the online algorithm.
We will prove that for $1 \leq i \leq n$,
\begin{align}
\E[A_i] \geq \alpha \E[\Delta_i] \geq \alpha (\E[OPT_i] - \E[OPT_{i+1}])  \label{eq:ADelta}
\end{align}
for a suitable choice of $\alpha > 0.5$.
The expectations in (\ref{eq:ADelta}) are taken over the random choices of machine $1, \ldots i$.
Note that $OPT_i - OPT_{i+1}$ is a random variable, but the sum of these quantities for $1 \leq i \leq n$ is 
$OPT_1 - OPT_{n+1} = OPT$, a deterministic quantity.
Summing up (\ref{eq:ADelta}) over $i =1, \ldots n$, we get $\E[A] \geq \alpha \cdot OPT$, 
proving that the algorithm gives an $\alpha$ approximation.

Inequality (\ref{eq:ADelta}) applies to a recursive application of the algorithm to the subinstance
consisting of machines $i, \ldots n$ and the jobs passed from machine $i-1$ to machine $i$.
The subinstance is a function of the random choices made by the first $i-1$ machines.
We will prove that for any instance of the random choices made by the first $i-1$ machines,
\begin{align}
\E[A_i] \geq \alpha \E[\Delta_i]. \label{eq:ADelta2}
\end{align}
Here, the expectation is taken over the random choice of machine $i$.
(\ref{eq:ADelta2}) immediately implies (\ref{eq:ADelta}) by taking expectation over the
random choices made by the first $i-1$ machines.

We need to establish (\ref{eq:ADelta2}).
In fact, it suffices to do this for $i=1$ and the proof applies to all values of $i$ since (\ref{eq:ADelta2})
is a statement about a recursive application of the algorithm.
Wlog, we normalize so that the fastest machine has speed $1$
and the largest job is $c$. 
Note that this is done by simply multiplying all machine speeds by a suitable factor and all job sizes by a suitable factor
-- both the LHS and the RHS of (\ref{eq:ADelta2}) are scaled by the same quantity.

In order to compare $\Delta_1$ with the profit of the algorithm, we
decompose the instance 
into a convex combination of simpler threshold 
instances in Lemma~4. Here, the speeds are either all the same or
take only two different values, 0 and 1.
It suffices to compare the profit of the algorithm to OPT on 
such threshold instances.

Intuitively, if there are so few fast machines that even a relatively
large job (job of weight at least 1) got assigned to a slow machine in
OPT, then the original instance is mostly comparable to the threshold instance
where only a few machines have speed 1 and the rest have speed 0. Even
if the fastest machine gets jobs assigned to machines of speed 0 in OPT,
this does not affect the profit of the algorithm relative to OPT because OPT
does not profit from these jobs either.
Thus we only care about jobs of weight at least 1.
Because a single machine can get at most two jobs of value in the range $[1,c]$,
handling this case only requires analyzing at most two jobs.
The proof for this case is contained in Lemma~\ref{lem:c^t}.

On the other hand, if there are a lot of fast machines so that all large
jobs are assigned to fast machines in OPT, then the original instance is
comparable to the threshold instance where all machines have speed 1. In this
case, the fastest machine can get assigned many jobs that all contribute
to OPT. However, because all speeds are the same, we can
deduce the worst possible sequence of jobs: after the first few jobs,
all other jobs have weights forming a geometric sequence. The rest of
the proof is to analyze the algorithm on this specific sequence. The detailed
proof is contained in Lemma~\ref{lem:9}.

The proofs of both Lemmata~\ref{lem:c^t} and~\ref{lem:9} use the
decomposable structure of the edge weights.

\subsection{Analysis Details}

Recall that $OPT_1$ is the value of the optimal solution for the instance, and $OPT_2$ is the value of the optimal
solution for the subinstance seen by machine 2 onwards.
Assume wlog that all job sizes are distinct (by perturbing job sizes infinitesimally).
For $y \leq c$, let $j(y)$ be the size of the largest job $\leq y$ or 0, if no such  job  exists.
Let $s(y)$ be the speed of the machine in the optimal solution that $j(y)$ is assigned to
or 0 if $j(y) = 0$.
If there is a job of size $y$ then $s(y)$ is the speed of the machine in the optimal solution
that this job is assigned to.
Note that $s(y) \in [0,1]$ is monotone increasing with $s(c)=1$. 
We refer to the function $s$ as the {\em speed profile}.
Note that $s$ is not a random variable.
Let the {\em assignment sequence} $\vec{w} = (w,w_1,w_2,\ldots)$ denote the set of jobs assigned to the fastest 
machine by the algorithm where $w > w_1 > w_2 > \ldots$.
Let $\max(\vec{w})$ denote the maximum element in the sequence $\vec{w}$, i.e. $\max(\vec{w}) = w$.
In Lemma~3, we bound $OPT_1 - OPT_2$ by a function that depends {\em
only} on $\vec{w}$, $s$, and $c$. Such a bound is possible because of
the fact that any job can be assigned to any machine, i.e. the graph is a complete graph. 
The value we take for the aforementioned random variable $\Delta_1$ turns out to be
exactly this bound.

\begin{lemma}
$OPT_1 - OPT_2 \leq c - (c-w) s(w) + \sum_{k \geq 1} w_k \cdot s(w_k)$
\label{lem:delta_i}
\end{lemma}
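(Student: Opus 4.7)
The plan is to construct an explicit feasible assignment $\mathcal{A}$ for the subinstance defining $OPT_2$---namely, machines $2, \ldots, n$ paired with the jobs $J \setminus \vec w$---by modifying $OPT_1$ locally, and then to use $OPT_2 \ge \mathrm{val}(\mathcal A)$. The lemma will follow from direct bookkeeping. The construction uses the complete-bipartite structure of the graph in an essential way: any remaining job can be placed on any remaining machine.

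I first locate the relevant jobs inside $OPT_1$. Since $OPT_1$ sorts machines by decreasing speed and jobs by decreasing size and matches them in order, the largest job (of size $c$) lands on machine $1$ and contributes $c$. For each $k \ge 1$, job $w_k$ is matched to a machine $M_{w_k}$ of speed $s(w_k)$; if $w < c$ then job $w$ sits on a distinct machine $M_w$ of speed $s(w)$; if $w = c$ then $w$ is already the job on machine $1$. The machines named here are pairwise distinct because $OPT_1$ is a matching and the jobs $c, w, w_1, w_2, \ldots$ are distinct.

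To build $\mathcal A$, start from $OPT_1$, delete machine $1$ together with whichever job (either $c$ or $w$) sits on it, and delete each $w_k$ from $M_{w_k}$, as well as $w$ from $M_w$ in the case $w < c$. Every deleted job lies in $\vec w$, and every remaining machine lies in $\{2, \ldots, n\}$, so what is left is a partial assignment feasible for the $OPT_2$ subinstance. Finally, if $w < c$ and $s(w) > 0$, reassign the (still available, since $c \notin \vec w$ in this case) job of size $c$ to the now empty machine $M_w$; if $w = c$ or $s(w) = 0$, skip this reassignment.

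It remains to tally the change in value. Deleting machine $1$ costs $c$; deleting $w_1, w_2, \ldots$ costs $\sum_{k \ge 1} w_k\, s(w_k)$; and the reassignment of $c$ to $M_w$, when it happens, replaces a contribution of $w \cdot s(w)$ by $c \cdot s(w)$, a net gain of $(c-w)\, s(w)$. Whenever the reassignment is skipped we have $w = c$ or $s(w) = 0$, in which case $(c-w)\, s(w) = 0$ and the same formula still holds. Hence $\mathrm{val}(\mathcal A) = OPT_1 - c + (c-w)\, s(w) - \sum_{k \ge 1} w_k\, s(w_k)$, and combining with $OPT_2 \ge \mathrm{val}(\mathcal A)$ yields the claim. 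The only place one has to be careful is the two boundary cases ($w = c$ and $s(w) = 0$); the point is that the algebra has been arranged so that the single unified formula degrades gracefully in each of them, which is the main thing to verify rather than any deep argument.
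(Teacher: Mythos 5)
Your proof is correct and follows essentially the same route as the paper's: both exhibit a feasible solution for the $OPT_2$ subinstance by deleting the fastest machine, moving the job of size $c$ onto the machine of speed $s(w)$, and leaving the machines that held $w_1, w_2, \ldots$ unchanged except for those deletions. Your explicit handling of the boundary cases $w=c$ and $s(w)=0$ is a slightly more careful rendering of the same accounting the paper does implicitly.
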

\begin{proof}
Let $I_1$, $I_2$ be the instances corresponding to $OPT_1$ and $OPT_2$.
$I_2$ is obtained from $I_1$ by removing the fastest machine and the set of jobs
that are assigned to the fastest machine by the algorithm.
Let us consider changing $I_1$ to $I_2$ in two steps:
(1) Remove the fastest machine and the largest job $w$ assigned by the algorithm to the fastest machine.
(2) Remove the jobs $w_1, w_2, \ldots$.
For each step, we will bound the change in the value of the optimal solution resulting in a feasible solution
for $I_2$
and computing its value -- this will be a lower bound for $OPT_2$.

First we analyze Step 1:
$OPT_1$ assigns the largest job $c$ to the fastest machine, contributing $c$ to its value.
The algorithm assigns $w$ to the fastest machine instead of $c$.
In $OPT_1$, $w$ was assigned to a machine of speed $s(w)$.
When we remove $w$ and the fastest machine from $I_1$, one possible assignment to the 
resulting instance is obtained by placing $c$ on the machine of speed $s(w)$.
The value of the resulting solution is lower by exactly $(c+w \cdot s(w)) - c \cdot s(w) = c-(c-w)s(w)$.
  
Next, we analyze Step 2:
Jobs $w_1, w_2, \ldots$ were assigned to machines of speeds $s(w_1), s(w_2), \ldots$ in $OPT_1$.
When we remove jobs $w_1, w_2, \ldots$, one feasible assignment for the resulting instance is
simply not to assign any jobs to the machines $s(w_1), s(w_2), \ldots$, and keep all other
assignments unchanged.
The value of the solution drops by exactly $\sum_{k \geq 1} w_k \cdot s(w_k)$.

Thus we exhibited a feasible solution to instance $I_2$ of value $V$ where
$$OPT_1 - V = c - (c-w) s(w) + \sum_{k \geq 1} w_k \cdot s(w_k).$$
But $OPT_2 \geq V$.
Hence, the lemma follows.
\end{proof}
We define the random variable $\Delta_1$, a function of the assignment sequence $\vec{w}$ and the speed profile $s$, to be
$$\Delta_1(\vec{w},s) = c - (c-w) s(w) + \sum_{k \geq 1} w_k \cdot s(w_k).$$
As defined, $\Delta_1(\vec{w},s) \geq OPT_1 - OPT_2$.
We note that even though $OPT_1$ and $OPT_2$ are functions of all the jobs in the instance,
$\Delta_1$ only depends on the subset of jobs assigned to the fastest machine by the algorithm.
Our goal is to show $\E[A_1] = E[\max(\vec{w})] \geq \alpha \E[\Delta_1]$.

First, we argue that it suffices to restrict our analysis to a simple set of step function speed profiles
$s_t$, $0 \leq t \leq 1$:
For $t \in (0,1]$, $s_t(y)=1$ for $y \in [c^t,c]$ and $s_t(y)=0$ for $y< c^t$.
For $t=0$, $s_0(y) = 1$ for all $y \leq c$.

\begin{lemma}\label{lem:decompose-step-functions}
Suppose that for $t=0$ and for all $t \in (0,1]$ such that there exists a job of weight $c^t$, we have
\begin{align}
\E[\max(\vec{w})] &\geq {\alpha} \E[\Delta_1(\vec{w},s_t)]  \label{eq:ADeltaStep}
\end{align}
Then, $\E[\max(\vec{w})] \geq {\alpha} (\E[OPT_1] - \E[OPT_2])$.
\end{lemma}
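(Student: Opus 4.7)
My plan is to reduce the claim to its step-function case via an affine-combination argument. By Lemma~\ref{lem:delta_i}, $OPT_1 - OPT_2 \leq \Delta_1(\vec{w}, s)$ pointwise, so after taking expectations it suffices to establish
\[
\E[\max(\vec{w})] \geq \alpha\, \E[\Delta_1(\vec{w}, s)].
\]
The key observation is that the functional $s \mapsto \Delta_1(\vec{w}, s) = c - (c - w) s(w) + \sum_{k \geq 1} w_k\, s(w_k)$ is affine in $s$. Hence whenever $s = \sum_i \delta_i \hat{s}_i$ is written as a convex combination with $\sum_i \delta_i = 1$, one has $\Delta_1(\vec{w}, s) = \sum_i \delta_i\, \Delta_1(\vec{w}, \hat{s}_i)$ pointwise. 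Provided every $\hat{s}_i$ belongs to the hypothesized family ($s_0$, or some $s_t$ with $t \in (0,1]$ and $c^t$ a job weight), taking the same convex combination of the hypothesized inequalities delivers the required bound.

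The decomposition I would construct is the telescoping one at job weights. List the jobs in decreasing order $c = y_1 > y_2 > \cdots > y_m$, and let $1 = \sigma_1 \geq \sigma_2 \geq \cdots$ be the speeds to which OPT assigns them (with $\sigma_{m+1} := 0$). Then
\[
s(y) = \sum_{j=1}^m (\sigma_j - \sigma_{j+1})\, \mathbf{1}[y \geq y_j]
\]
is a convex combination of indicators at job weights whose coefficients sum to $\sigma_1 = 1$. Every indicator with $y_j > 1$ coincides with $s_{\log_c y_j}$ and lies in the hypothesized family. The indicators corresponding to $y_j \leq 1$ all equal $1 = s_0(y)$ on $\{y \geq 1\}$, so I would collapse them into a single $s_0$ term of total coefficient $\sigma_{j^* + 1}$ where $j^* := \max\{j : y_j > 1\}$. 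This produces a surrogate $\hat{s} := \sigma_{j^* + 1}\, s_0 + \sum_{j \leq j^*} (\sigma_j - \sigma_{j+1})\, s_{\log_c y_j}$, which is a bona fide convex combination of allowed step functions and agrees with $s$ throughout $\{y \geq 1\}$.

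The main obstacle is that this collapse into $s_0$ is exact only on $\{y \geq 1\}$; on $\{y < 1\}$ we only have $\hat{s} \geq s$. When some $w_k < 1$ occurs, the difference $\Delta_1(\vec{w}, \hat{s}) - \Delta_1(\vec{w}, s)$ is not sign-definite, because $\Delta_1$ is decreasing in $s(w)$ (through $-(c-w)$) but increasing in each $s(w_k)$ for $k \geq 1$. Closing this residual is the delicate step of the proof. I would handle it either by revisiting the proof of Lemma~\ref{lem:delta_i} to argue $OPT_1 - OPT_2 \leq \Delta_1(\vec{w}, \hat{s})$ directly --- constructing a feasible solution for $I_2$ whose loss matches $\Delta_1(\vec{w}, \hat{s})$ rather than $\Delta_1(\vec{w}, s)$ --- or by extending the hypothesized inequality to the indicators at job weights $\leq 1$ via a case analysis that exploits the geometric structure of the jobs placed on machine~$1$ (which occupy distinct intervals). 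Once the residual is handled, linearity together with the convex combination of the hypothesized step-function inequalities completes the proof.
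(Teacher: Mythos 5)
Your setup coincides with the paper's: reduce via Lemma~\ref{lem:delta_i}, exploit the affineness of $s \mapsto \Delta_1(\vec{w},s)$, and replace $s$ below weight $1$ so that the result is a convex combination of the allowed step functions. Indeed, your surrogate $\hat{s}$ is exactly the paper's auxiliary function $s'$ (equal to $s$ on $[1,c]$ and to $s(1)$ below $1$), and your telescoping decomposition is just an explicit way of writing the convex combination the paper asserts. However, the step you flag as ``the main obstacle'' and leave open is precisely the one observation the proof needs, and your reason for believing it is delicate is incorrect. The difference $\Delta_1(\vec{w},\hat{s})-\Delta_1(\vec{w},s)$ \emph{is} sign-definite: the only place $s$ enters $\Delta_1$ with a negative sign is through $s(w)$, where $w=\max(\vec{w})$ is the largest job the algorithm places on the fastest machine. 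After normalization the largest job in the instance is $c$, the fastest machine is probed first, and no machine can hold a job in a higher interval than the one containing $c$; hence $w$ lies in the interval $(c^{x},c^{x+1}]$ containing $c$, so $w\ge c^{x}\ge 1$ and therefore $\hat{s}(w)=s(w)$. The negative term is untouched, while the terms $\sum_{k\ge1}w_k\,\hat{s}(w_k)$ enter positively and $\hat{s}\ge s$ pointwise, so $\Delta_1(\vec{w},\hat{s})\ge\Delta_1(\vec{w},s)$ for every realization, which is all that is needed.

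Consequently both of your proposed workarounds are unnecessary, and the second would not go through as stated: the indicators $\mathbf{1}[y\ge y_j]$ at job weights $y_j<1$ correspond to thresholds $c^{t}$ with $t<0$, which do not belong to the hypothesized family $\{s_0\}\cup\{s_t:\ t\in(0,1],\ c^{t}\text{ a job weight}\}$, so inequality (\ref{eq:ADeltaStep}) cannot be invoked for them; the first (re-proving Lemma~\ref{lem:delta_i} for $\hat{s}$) is more work than the one-line monotonicity argument above. With that observation inserted, the remainder of your argument --- collapsing the sub-$1$ indicators into $s_0$, linearity of $\Delta_1$ in $s$, and taking the convex combination of the hypothesized inequalities --- is exactly the paper's proof and is complete.
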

\begin{proof}
Consider function $s'(y)$ defined as follows:
$s'(y) = s(y)$ for $y \in [1,c]$ and $s'(y) = s(1)$ for $y < 1$.
Note that $s'$ is not a random variable.
We claim that $\Delta_1(\vec{w},s') \geq \Delta_1(\vec{w},s)$.
Since the largest job assigned to the fastest machine is $w \in [1,c]$, the $s(w)$ term is unchanged in 
going from $\Delta_1(\vec{w},s)$ to $\Delta_1(\vec{w},s')$.
Further, the $s(w_k)$ terms in $(\vec{w},s')$ are $\geq$ the corresponding terms in $\Delta_1(\vec{w},s)$.

It is easy to see that $s'$ is a convex combination of the step functions $s_t$, $0 \leq t \leq 1$. 
More specifically, $s' = \sum_t p_t s_t$ for suitably chosen coefficients $p_t$ such that (a) $\sum_t p_t=1$
and (b) $p_t = 0$ if $t>0$ and no  job with weight $c^t$ exists.

For a fixed assignment sequence $\vec{w}$,
note that $\Delta_1(\vec{w},s') = \Delta_1(\vec{w},\sum_t p_t s_t) = \sum_t p_t \cdot \Delta_1(\vec{w},s_t)$.
Hence, for a distribution over assignment sequences $\vec{w}$,
$$\E[\Delta_1(\vec{w},s')] = \E[OPT_1(\vec{w},\sum_t p_t s_t)] = \sum_t p_t \cdot \E[\Delta_1(\vec{w},s_t)]$$

Now, suppose that for all $0 < t \leq 1$ such that there exists a job of weight $c^t$ 
and for $t=0$
\begin{align*}
\E[\max(\vec{w})] &\geq {\alpha} \E[\Delta_1(\vec{w},s_t)].\\
\intertext{This implies that}
\E[\max(\vec{w})] &\geq {\alpha} \sum_t p_t \cdot \E[\Delta_1(\vec{w},s_t)] = {\alpha} \E[\Delta_1(\vec{w},s')]\\
& \geq {\alpha}\E[\Delta_1(\vec{w},s)]
\geq \alpha (\E[OPT_1] - \E[OPT_2])
\end{align*}
\end{proof}

Note that since we scaled job sizes, the thresholds (i.e interval boundaries) $c^{k+x_1}$ should also be scaled by the same 
quantity (say $\gamma$).
After scaling, let $x \in (0,1]$ be such that $c^x$ is the unique threshold from the set $\{\gamma c^{k+x_1}, k \text{\  integer}\}$ 
in $(1,c]$.
Since $x_1$ is uniformly distributed in $(0,1]$, $x$ is also uniformly distributed in $(0,1]$.
Having defined $x$ thus, the interval boundaries picked by the algorithm for the fastest machine are $c^{x+k}$ for integers $k$.

We prove (\ref{eq:ADeltaStep}) for $\alpha = \expression$ in two separate lemmata, 
one for the case $t>0$ (Lemma~\ref{lem:c^t}) and the other for the case $t=0$ (Lemma~\ref{lem:9}).
Recall that
the expression for $\Delta_1$ only depends on the subset of jobs assigned to the fastest machine.
We call a job a {\em local maximum} if it is larger than all jobs preceding it.
Since the algorithm assigns a new job to the fastest machine  if and only if it falls in a larger interval than the current
largest job, it follows that any job assigned to the fastest machine must be a local maximum.

Define $m_S(y)$ to be the minimum job in the sequence of all local maxima in the range $(y, cy]$, i.e., the first job larger than $y$ and at most $cy$, if such a job exists and $0$ otherwise. We use $m_S(y)$ in two ways.
(1) We define $u_0 = m_S(1)$. Note that $u_0$ is not a random variable. We use $u_0$ in Lemma~\ref{lem:c^t} to prove the desired statement for $t>0$. 
Specifically,  
we use $u_0$ to compute (i) a lower bound for $\E[w]$ as a function of $u_0$ (and not of any other jobs)
and (ii) an upper bound for $\E[\Delta_1]$ as a function of $u_0$. Combining (i) and (ii) we prove that the desired inequality holds for all $u_0$.
(2) In Lemma~\ref{lem:9} we bound
$\E[\sum_{k\ge 1}w_k]$ by a sum of $m_S(y)$ over suitable values of $y$. This simplifies the analysis since the elements in the
subsequence of {\em all} local maxima are {\em not} random variables, while the values in $\vec{w}$ are
random variables.

We first prove some simple properties of $u_0$ that we will use:
\begin{claim}
(1) $u_0 \leq w$ and  (2) $u_0 \geq w_1$.
\end{claim}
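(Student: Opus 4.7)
The plan is to trace the execution of the algorithm on the fastest machine around the arrival of $u_0$, exploiting the fact that every job arriving before $u_0$ has value at most $1$ (otherwise a smaller local maximum above $1$ would exist, contradicting the minimality of $u_0$). Under the normalization, the interval boundaries on machine 1 are the points $c^{x+k}$: the interval $(c^{x-1},c^x]$ (with index $k=-1$) contains $1$, so any job of value at most $1$ lies in interval index $\leq -1$, while every job of value at most $c$ lies in interval index $\leq 0$ (the index $0$ is attained by some job of value $\leq c$ only when $x<1$).

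I would split into three cases. First, if $u_0>c^x$, then $u_0$ lies in interval index $k=0$, strictly above the interval of machine 1's current largest job (which has index $\leq -1$ if any job is present), so $u_0$ is assigned; because $c\leq c^{x+1}$, no subsequent job falls into interval index $\geq 1$, so $u_0$ is never displaced, yielding $w=u_0$ and $w_1\leq 1<u_0$. Second, if $u_0\leq c^x$ but machine 1 holds no job in interval index $-1$ when $u_0$ arrives, then $u_0$ is assigned in that interval; when $x<1$ the job $c$ exceeds $c^x\geq u_0$ and, being a local maximum, arrives strictly after $u_0$, so either $c$ itself or some earlier job in interval index $0$ is placed on machine 1 as $w$, giving $w_1=u_0$; when $x=1$, no job exceeds $c^x=c$, so $u_0$ stays on as $w$ and $w_1\leq 1<u_0$. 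Third, if $u_0\leq c^x$ and machine 1 already holds a job $v\in(c^{x-1},1]$ in interval index $-1$, then $u_0$ shares its interval with $v$ and is not assigned; a later job in interval index $\geq 0$ (e.g.\ $c$, which exists because the requirement $c^{x-1}<1$ forces $x<1$) is subsequently assigned to machine 1, so $w>c^x\geq u_0$ and $w_1=v\leq 1<u_0$.

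In each of the three cases both $w\geq u_0$ and $w_1\leq u_0$ hold, which completes the proof. The main technical nuisance I expect is the boundary value $x=1$: there $c$ and $1$ collapse into neighboring intervals, and one must verify that the third case cannot occur, because it would require a witness $v\in(c^{x-1},1]=(1,1]=\emptyset$. Once that boundary is handled, the rest of the argument is a direct consequence of the interval-assignment rule of the algorithm together with the fact that the sequence of local maxima is strictly increasing in value.
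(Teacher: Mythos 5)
Your proof is correct. The one fact you use without much ceremony -- that $u_0$ is the chronologically first job of value exceeding $1$, so that every job arriving before it has value at most $1$ -- is valid: the first job exceeding $1$ is automatically a local maximum in $(1,c]$, and since local maxima arrive in increasing order it must coincide with the minimum local maximum in that range. Your three cases (indexed by whether $u_0$ lands in interval index $0$ or $-1$, and by whether machine $1$ already holds a job in index $-1$) are exhaustive, and your handling of the boundary $x=1$ is right.

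The paper's own proof is shorter and purely order-theoretic: for (1) it just observes that $w$ is itself a local maximum lying in $(1,c]$ while $u_0$ is by definition the smallest such; for (2) it uses that $w_1$ is the smallest local maximum in its interval $(c^{x-k-1},c^{x-k}]$ with $k\ge 0$, so that $u_0>1\ge c^{x-k-1}$ forces $u_0$ either above that interval or inside it, and in both cases $u_0\ge w_1$. Your execution-trace argument is genuinely different in organization: it costs a case analysis, but it buys sharper conclusions (you identify $w$ and $w_1$ exactly in each case, e.g.\ $w_1=u_0$ or $w_1\le 1$), and it derives from first principles the structural fact that the paper merely asserts, namely that each job placed on the fastest machine is the minimum local maximum of its interval. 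Both routes ultimately rest on the same two ingredients -- the interval-assignment rule and the monotonicity of the local-maximum sequence.
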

\begin{proof}
$u_0 \leq w$ as $u_0$ is the minimum element in the sequence of all local maxima in $(1,c]$
and $w$ is the element from the interval $(1,c]$ picked by the algorithm. 

$w_1$ is the minimum element in the sequence of local maxima in the range $(c^{x-k-1}, c^{x-k}]$ for $x \in (0,1]$ and
$k$ a non-negative integer. Either $u_0 \ge c^{x-k} \ge w_1$, or $u_0$ also falls into $(c^{x-k-1}, c^{x-k}]$ and
 $u_0 \geq w_1$ follows from the fact that $w_1$ is the smallest local maximum in this range, while $u_0$ is an arbitrary local  maximum in this range.
\end{proof}

The next lemmata conclude our algorithm analysis.
\begin{lemma}\label{lem:c^t} 
For $c \ge e$, $t\in (0,1]$
such that there exists a job of weight $c^t$,
and $\alpha = \expression$,
we have
$$\alpha\E[\Delta_1(\vec{w},s_{t})] \le \E[\max(\vec{w})]$$
\end{lemma}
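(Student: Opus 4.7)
The plan is to bound $\E[\Delta_1(\vec w,s_t)]$ from above and $\E[\max(\vec w)]=\E[w]$ from below, both as explicit functions of the deterministic quantity $u_0 \in (1,c]$. Setting $\tau = \log_c u_0 \in (0,1]$ and using $s_t(y)=\mathbf{1}[y\ge c^t]$, the expression
\[
\Delta_1(\vec w,s_t) = c-(c-w)\mathbf{1}[w\ge c^t]+\sum_{k\ge 1:\,w_k\ge c^t}w_k
\]
depends only on the sub-sequence of local maxima of the input lying in $(1,c]$, and on the random shift $x$ used by the fastest machine.

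First I would analyze the law of $\vec w$. The first local maximum $u_0$ in $(1,c]$ is always placed on the fastest machine, because all earlier jobs are $\le 1$ and land in a strictly lower interval cell of the random partition determined by $x$. Each later local maximum $u' \in (u_0,c]$ upgrades the current $w$ iff it falls in a strictly higher interval cell; applied to the global maximum $c$, this event is $\{x \ge \tau\}$, of probability $1-\tau$. A monotone-coupling argument in $x$ shows that inserting additional intermediate local maxima in $(u_0,c)$ only helps the algorithm, since the factor-$c$ geometric gap between consecutive $w_k$'s ensures the added $w_k$ contributions to $\Delta_1$ are dominated by the corresponding gains in $\E[w]$. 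Consequently the worst-case inputs fall into a small family: either sparse (just $u_0$ and $c$ as local maxima in $(1,c]$) or densely packed in an interval of the form $[c^a,c^b]$. In both families $\E[w]$ and $\E[\Delta_1]$ reduce to elementary integrals in $x$ over $(0,1]$.

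I would then do the case analysis $\tau \ge t$ versus $\tau < t$, corresponding to whether $s_t(u_0)=1$ or $0$. In the regime $\tau\ge t$, $s_t(w)=1$ on every realization, so $\Delta_1$ collapses to $w+\sum_k w_k$ and direct evaluation on the dense family, optimized over $\tau$, makes the ratio $\E[w]/\E[\Delta_1]$ approach the first branch $(c-1)/(c\ln c)$ in the limit $\tau\to 0^+$, $t \to 1$. In the regime $\tau< t$, $s_t(u_0)=0$ kills the $\sum w_k s_t(w_k)$ and $(c-w)s_t(w)$ contributions on the two-local-maximum input, giving $\E[\Delta_1]=c$ and $\E[w]=c(1-\tau)+c^\tau\tau$; more generally the mixed sparse/dense family leads to an optimization whose critical equation, of the form $(1+\tau\ln c)e^{\tau\ln c}=ec$ (or its natural refinement involving $\beta$), is solved by a Lambert-$W$ expression. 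With the choice $\beta = \frac{c\ln c}{c-1}-1$, this optimum is exactly $h(c)$, the second branch of $\alpha$. Taking the infimum over both cases gives $\alpha = \expression$.

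The main obstacle is the worst-case reduction: showing that among all input sequences consistent with the deterministic parameters $u_0$ and $c^t$, it suffices to consider the sparse/dense family above, and then carrying out the resulting optimization carefully enough to match the Lambert-$W$-based definition of $h(c)$ exactly. The reduction relies on a careful monotonicity/coupling argument in $x$ to handle intermediate local maxima, and the two-branch matching uses the identity $1/(1+\beta)=(c-1)/(c\ln c)$ so that the two optimization regimes meet exactly at the form of $\min(\frac{c-1}{c\ln c},h(c))$ stated in the lemma.
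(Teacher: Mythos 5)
Your skeleton --- split on whether $s_t(u_0)=1$ or $0$, lower-bound $\E[w]$ by integrating over the random shift $x$, and reduce $\E[\Delta_1]$ to a function of $u_0$ --- matches the paper's, but the load-bearing steps are either missing or point the wrong way. The central ``monotone-coupling'' claim, that inserting intermediate local maxima in $(u_0,c)$ only helps the algorithm, is backwards: the algorithm's top job $w$ is the \emph{first} local maximum exceeding $c^x$, so adding a local maximum $v\in(u_0,c)$ replaces $w=c$ by $w=v$ on the event $x\in(\log_c u_0,\log_c v)$, which \emph{decreases} $\E[w]$ and \emph{increases} $c-(c-w)s_t(w)$; it helps the adversary. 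The paper's proof of this lemma needs no worst-case-sequence reduction at all for $t>0$: it only uses the pointwise lower bounds $w\ge\max(c^x,u_0)$ and $w\ge c^t$ whenever $x$ exceeds the exponent $z$ of the largest local maximum below $c^t$. You are also missing the two structural facts that make $t>0$ tractable: (a) since a job of weight $c^t$ exists and $u_0$ is the \emph{smallest} local maximum above $1$, necessarily $u_0\le c^t$, so your regime $\tau\ge t$ degenerates to $u_0=c^t$; and (b) $w_k<1$ for all $k>1$ (consecutive members of $\vec w$ lie in distinct cells), so only $w_1$ can contribute to $\sum_k w_k s_t(w_k)$, and only when $w_1=u_0=c^t$, i.e.\ with probability $1-t$, giving the bound $(1-t)c^t$. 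Replacing this by the full $w+\sum_k w_k$ discards exactly the leverage that the threshold $c^t$ provides and forces you back into the geometric-tail analysis that is only needed for $t=0$.

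Finally, the branch assignment is reversed, which suggests the optimizations were not actually carried out. The elementary branch $\frac{c-1}{c\ln c}$ comes from the case $u_0<c^t$: there all $s_t(w_k)=0$, the danger is the event $s_t(w)=0$ on which $\Delta_1$ keeps the full $c$, and minimizing the explicit integral $V(t,u_0)$ over $u_0\in[1,c^z]$ and $t\in[z,1]$ (worst case $u_0=1$, $t=z$, then $t=1$) yields $\alpha\le\frac{c-1}{c\ln c}$ with no Lambert $W$ anywhere; your proposed critical equation $(1+\tau\ln c)e^{\tau\ln c}=ec$ does not arise. The Lambert-$W$ branch $h(c)$ enters through the case $u_0=c^t$, where the required inequality $\alpha\bigl(c^t+\frac{c-c^t}{\ln c}\bigr)\le\frac{c-c^t}{\ln c}+tc^t$ is implied by the inequality $\alpha(\beta u_0+c)\le c-u_0+u_0\ln u_0$ that is proved (and genuinely needed) in the $t=0$ lemma. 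Your identity $1/(1+\beta)=(c-1)/(c\ln c)$ is correct but does not substitute for these computations. As written, the proposal is a plan whose two pivotal reductions (the extremal-sequence claim and the per-case optimizations) are unproven, and the one justification offered for the first is false.
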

\begin{proof}
Because there is a job with weight $c^t$, it must be the case that $u_0 \le c^t$. 
As $w$ is the job placed by the algorithm on the fastest machine, $w$ is in the same interval as $c$ for
any choice of the random value $x \in (0,1]$. Thus, $w \ge c^x$. 
As $c \ge w > c w_2$ it follows that
$w_k < 1 \leq c^t$ for all $k >1$ and, thus, $s_t(w_k) = 0$ for all $k > 1$. Hence $\sum_{k > 1} w_k \cdot s_t(w_k) = 0$.
To analyze $\E[\Delta_1]$ we have to consider two cases, depending on whether $u_0 = c^t$ (and hence $w_1$ might contribute to $\E[\Delta_1]$) or
whether $u_0 < c^t$ (and, thus, $s(w_1) = 0$ and $w_1$ does not contribute to $\E[\Delta_1]$).

{\bf Case 1:} $u_0 = c^t$.
Since $w \ge u_0$ it holds that $s_{t}(w)=1$ for all choices of $x$. Thus we have
$$\E[c-(c-w)s_{t}(w)] = \E[w]$$

As discussed above, $\sum_{k > 1} w_k \cdot s_t(w_k) = 0$ and, thus, the only contribution to $\E[\sum_{k\ge 1}w_k\cdot s_{t}(w_k)]$ is from $w_1$. Additionally $s_{t}(w_1) = 1$ only if   $w_1=u_0=c^t$, and this only happens when $x$ is chosen such that $x\ge t$. Thus,
$$\E[\sum_{k\ge 1}w_k\cdot s_{t}(w_k)] \le (1-t)c^t$$
Note that $w \ge \max(c^x, c^t)$. Thus we have
\begin{align*}
\E[w] &\ge \int_t^1 c^x dx + \int_0^{t} c^t dx
\ge \frac{c-c^t}{\ln c} + tc^t
\end{align*}

In this case, we want to show 
$$\alpha\E[\Delta_1] \le \E[w].$$
This holds if
$$\alpha\left(c^t + \frac{c-c^t}{\ln c}\right) \le \frac{c-c^t}{\ln c}+tc^t$$

Since $u_0 = c^t$, this inequality follows for all $\alpha \le h(c)$ from Inequality~\ref{eq:ratio-all1} below.

{\bf Case 2:} $u_0 < c^t$.
As $w_k \le u_0$ for $k \ge 1$, in this case, for all choices of $x$, all speeds $s_{t}(w_1)=s_{t}(w_2)=\ldots = 0$ so $\E[\sum_{k\ge 1}w_k\cdot s_{t}(w_k)]=0$. 
Thus, it suffices to show that $\E[\alpha (c - (c-w)s_t(w))] \le \E[w]$, or equivalently that
$\alpha \E[c(1-s_t(w))] \le \E[w - \alpha w s_t(w)]$.
 
 Let $c^z$ be the greatest local maximum that is smaller than $c^t$.
If $x > z$, then $w \ge c^t$ and, thus, $s_{t}(w) = 1$. 
If $x \le z$, then $w$ is the first local maximum greater than $c^x$, while
$c^z$ is a local maximum greater than $c^x$. Thus, it is either equal to $w$ or a later local maximum, which by the definition of local maximum implies that it is larger than $w$. Hence,
$w\le c^z < c^t$ and thus, $s_{t}(w) = 0$. Therefore,
$$\E[c(1-s_{t}(w))] \le c \int_0^z 1 dx = cz$$
We also have
\begin{align*}
\E[(1-\alpha s_{t}(w))w] &\ge (1-\alpha)\int_t^1 c^x dx +(1-\alpha)\int_{z}^t c^t dx+ \int_{\log_c u_0}^{z} c^x dx + \int_0^{\log_c u_0} u_0 dx\\
&=(1-\alpha)\frac{c-c^t}{\ln c}+(1-\alpha)\frac{c^t(t-z)}{\ln c}+\frac{c^z-u_0}{\ln c}+u_0\log_c u_0\\
&=: V(t, u_0)
\end{align*}

Thus it suffices to show that $\alpha \E[c(1-s_{t}(w))] \le V(t, u_0)~\forall z\in[0,1],t\in[z,1], u_0 \in[1,c^z]$. For any fixed $t$ and $z$, the value of $u_0$ minimizing $V(t, u_0)$ is $u_0=1$. After fixing $u_0=1$, we have 
$$V(t, 1) = (1-\alpha)\frac{c-c^t}{\ln c}+(1-\alpha)\frac{c^t(t-z)}{\ln c}+\frac{c^z-1}{\ln c}$$
Therefore,
$$\frac{\partial V(t,1)}{\partial t} = (1-\alpha)\frac{c^t + tc^t/\ln c - c^t/\ln c - zc^t/\ln c}{\ln c}$$
Notice that $\partial V(t,1)/\partial t$ is non-negative for all $t\in [z,1]$ if $c\ge e$.
Therefore, for any $c\ge e$, it suffices to consider only $u_0=1, t=z$ and prove that
$$\alpha \left(ct + \frac{c-c^t}{\ln c}\right) \le \frac{c-1}{\ln c}$$
for $\alpha$ as large as possible. The following claim shows that this inequality holds for 
$\alpha \le  \frac{c-1}{c\ln c}$.

\begin{claim}
For $c\ge e$,
$$\frac{\frac{c-1}{\ln c}}{ct + \frac{c-c^t}{\ln c}} \ge \frac{c-1}{c\ln c}~\forall t\in[0,1]$$
\end{claim}
\begin{proof}
  Consider $f(t) = ct + \frac{c-c^t}{\ln c}$. We have $f'(t) = c - \frac{c^t}{\ln^2 c}\ge 0~\forall t\in [0,1]$. Thus, the maximum $f(t)$ is achieved when $t=1$ and $f(1) = c$. Therefore,
$$\frac{\frac{c-1}{\ln c}}{ct + \frac{c-c^t}{\ln c}} \ge \frac{c-1}{c\ln c}~\forall t\in[0,1]$$
\end{proof}
Thus, altogether the lemma holds for $\alpha = \expression$.
\end{proof}

\begin{lemma}\label{lem:9}
For $s_0(x)\equiv 1$ 
and $\alpha = h(c)$,
we have $\E[\max(\vec{w})] \ge \alpha \E[\Delta_1(\vec{w},s_{0})]$.
\end{lemma}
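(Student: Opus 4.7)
My plan is to reduce the desired inequality to an integral inequality for the function $m_S(c^y)$ and then optimize over a canonical worst-case sequence. First, since $s_0 \equiv 1$ we have $\Delta_1(\vec{w},s_0) = w + \sum_{k \ge 1} w_k$, so the lemma is equivalent to $(1 - h(c))\,\E[w] \ge h(c)\,\E[\sum_{k \ge 1} w_k]$. Over random $x \in (0,1]$, the top job $w$ on the fastest machine is the smallest local maximum in the top interval $(c^x, c^{x+1}]$, i.e.\ $w = m_S(c^x)$, and each $w_k$ ($k \ge 1$) equals $m_S(c^{x-k})$ whenever level $-k$ contains a local maximum. Taking expectations over uniform $x \in (0,1]$ and interchanging summation with integration,
\[
  \E[w] = \int_0^1 m_S(c^y)\,dy, \qquad \E\Bigl[\sum_{k \ge 1} w_k\Bigr] = \int_{-\infty}^0 m_S(c^y)\,dy,
\]
so the claim reduces to showing $\int_{-\infty}^0 m_S(c^y)\,dy \le \frac{1-h(c)}{h(c)} \int_0^1 m_S(c^y)\,dy$ for every local-maximum sequence.

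Second, I will identify the worst-case adversary. The function $g(y) := m_S(c^y)$ is a non-decreasing step function satisfying $c^y < g(y) \le c^{y+1}$ wherever $g(y) > 0$. To minimize $A := \int_0^1 g$ the adversary makes local maxima dense on an upper interval $[\delta,1]$, so that $g(y) \to c^y$ for $y \in [\delta,1]$ while $g(y) = c^\delta$ on $[0,\delta)$. To maximize $B := \int_{-\infty}^0 g$ the adversary then spaces the remaining local maxima \emph{exactly} one $\log_c$-unit apart at $\delta-1, \delta-2, \ldots$: denser spacing would collapse $g(y)$ towards $c^y$ and lose the multiplicative factor, while wider spacing would create gaps where $g = 0$. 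Direct computation for this canonical family, parametrized by $\delta \in [0,1]$, yields
\[
  A(\delta) = \delta c^\delta + \frac{c - c^\delta}{\ln c}, \qquad B(\delta) = c^\delta\!\left(1 - \delta + \frac{1}{c-1}\right).
\]

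Third, I will optimize $B/A$ over $\delta$. Substituting $v = (1-\delta)\ln c$ and $\beta = c\ln c/(c-1) - 1$, the ratio becomes $B/A = [v + (\beta+1)/c] / [\ln c - v - 1 + e^v]$. The first-order condition simplifies, using the identity $\ln c + (\beta+1)/c = \beta + 1$, to $(v + \lambda)\,e^v = \beta$ with $\lambda := (\beta+1)/c - 1$; and since $\lambda + \ln c = \beta$ implies $e^\lambda = e^\beta/c$, this critical equation is precisely the Lambert identity $v^* + \lambda = W(\beta e^\beta/c)$. Setting $W := W(\beta e^\beta/c)$ and substituting back yields $e^{-v^*} = W/\beta$ and $B/A = W/(\beta - W)$, so $\alpha = 1/(1 + B/A) = 1 - W/\beta = h(c)$, as required. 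The principal obstacle is the rigorous justification of the reduction in step~2---that no other adversary configuration beats the canonical dense-above-$\delta$-and-unit-spaced-below form. I would handle this with an exchange argument on the per-local-maximum contributions $c^{L_i}\min(1, L_i - L_{i-1})$ to $A$ and $B$, verifying that local perturbations (shifting, inserting, or deleting a single $L_i$) cannot simultaneously decrease $A$ and increase $B$ beyond what the canonical family achieves.
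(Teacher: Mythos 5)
Your overall strategy coincides with the paper's: reduce to the inequality $h(c)\,\E[\sum_{k\ge1}w_k] \le (1-h(c))\,\E[w]$, bound $\E[\sum_{k\ge1}w_k]$ by $B_S = \int_{-\infty}^0 m_S(c^y)\,dy$, identify the worst-case local-maximum sequence as a geometric sequence with ratio exactly $c$ below $u_0=c^\delta$, and close with the same Lambert-$W$ optimization. Indeed your $A(\delta)$ and $B(\delta)$ are exactly the paper's $\frac{c-u_0}{\ln c}+u_0\log_c u_0$ and $(1-\log_c u_0)u_0+\frac{u_0}{c-1}$ under $u_0=c^\delta$, and your critical-point computation reproduces $\alpha = 1-\frac{1}{\beta}W\bigl(\frac{\beta e^\beta}{c}\bigr)=h(c)$, so the endgame is correct.

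The gap is exactly where you flag it: the extremality of the canonical family is asserted rather than proved, and that is the technical heart of the lemma. Two concrete issues. First, your proposed exchange criterion --- that no perturbation can ``simultaneously decrease $A$ and increase $B$'' --- is the wrong test for bounding the ratio $B/A$: a perturbation increasing both quantities can still increase the ratio. The paper avoids this by decoupling the two sides: it fixes $u_0=m_S(1)$ and uses only the pointwise bound $w\ge\max(c^x,u_0)$ for the denominator, so the lower bound on $\E[w]$ depends on the sequence only through $u_0$, and then it maximizes $B_S$ alone over all sequences with that $u_0$. You should adopt the same decoupling; otherwise your ``dense above $\delta$'' step needs a separate argument that it is compatible with maximizing $B$. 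Second, maximizing $B_S$ for fixed $u_0$ is not handled in the paper by a purely local perturbation claim: after two easy reductions (insert a local maximum into any gap of ratio $>c$, which only increases $m_S$ pointwise; delete the middle of any three consecutive local maxima spanning a total ratio $\le c$), the remaining configuration is a block at exact ratio $c$ with one free seam, and the paper shows that $\partial B_S/\partial z$ is monotonically increasing in the seam parameter $z$, so the maximum is attained at an extreme point; an induction on the number of jobs and a final comparison showing $v=u_0$ beats $v=c$ finish the argument. These computations are not automatic consequences of ``local moves don't help,'' so your proof is incomplete until you carry out this part.
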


\begin{proof}
Since $s_0(w)=1$ for all choices of $x$, it holds that
$\E[c-(c-w)s_{t}(w)] = \E[w]$.
Thus we need to show $\alpha\E[\sum_{k\ge 1} w_k] \le (1-\alpha)\E[w]$. 
As $w \ge \max(u_0, c^x)$ we have the following lower bound for $\E[w]$:
$$\E[w] \ge \int_{\log_c u_0}^1 c^x dx + \int_{0}^{\log_c u_0} u_0 dx =  \frac{c-u_0}{\ln c} + u_0 \log_c u_0$$

Now, to prove the inequality, we only need to bound from above $\E[\sum_{k\ge 1} w_k]$ for a fixed $u_0\in [1,c]$. We can write $\E[\sum_{k\ge 1}w_k]$ in terms of $m_S(x)$ as follows.
$$\E[\sum_{k\ge 1}w_k] \le \sum_{i=1}^{\infty} \int_{0}^1 m_S(c^{-i+x}) dx = \int_{-\infty}^0 m_S(c^x)dx =: B_S$$

The following claims analyze the structure of the jobs smaller than $u_0$ in the worst case, i.e., if $S$ maximizes $B_S$.

\begin{claim}
For any sequence $S$ of all local maxima where there are 2 consecutive local maxima $u_0\ge w'_u\ge w'_{u+1}$ with $w'_u > c w'_{u+1}$ there is a sequence $S'$ with $B_S$ at least as large and no such pair of consecutive local maxima.
\end{claim}
\begin{proof}
Add a new local maximum of weight $w'_u/c$ to $S$ to form $S'$. Notice that $m_S(x) \le m_{S'}(x)~\forall x$. This argument can be repeated until there is no pair of consecutive local maxima with ratio greater than $c$.
\end{proof}

\begin{claim}
Consider a sequence of all local maxima $S$ with 3 consecutive local maxima $u_0\ge w'_u \ge w'_{u+1}\ge w'_{u+2}$ where $w'_u \le cw'_{u+2}$. After removing $w'_{u+1}$, the resulting sequence $S'$ has $B_{S'} \ge B_{S}$.
\end{claim}
\begin{proof}
For all $y\not\in [w'_{u+2}, w'_u]$, we have $m_{S'}(y) = m_{S}(y)$.
For all $y\in [w'_{u+2}, w'_u]$, we have $m_{S'}(y) \ge m_{S}(y)$. Thus, $B_{S'}\ge B_{S}$.
\end{proof}

\begin{claim}
 Consider a sequence of all local maxima $S$ containing $w'_u > w'_{u+1} > \cdots> w'_{v}$ satisfying

(1) $u_0\ge w'_u$

(2) $w'_{u+i} = c^{1-i}w'_{u+1}~\forall 1\le i < v-u$

(3) $w'_u/w'_{u+1} \le c$, and

(4) $w'_{v-1}/w'_v \le c$

\noindent
Then either one of the following conditions applies

(1)  $w'_u/w'_{u+1} = c$, or

(2)  $w'_{v-1}/w'_{v} = c$, or

there is a sequence $S'$ with at most the same number of local maxima and $B_{S'} > B_{S}$.
\end{claim}

\begin{proof}
Assume that none of the conditions applies. We will show it is possible to move the jobs to form a sequence $S'$ with $B_{S'}\ge B_{S}$.

We consider the effect of moving $z=w'_{u+1}$ while maintaining the relation $w'_{u+j}=c^{1-j}w'_{u+1}~\forall 1\le j<v-u$. We have
\begin{align*}
B_S &= \int_{-\infty}^{\log_c w'_v} m_S(x)dx + \int_{\log_c w'_u}^1 m_S(x)dx\\
&\qquad +\int_{\log_c w'_v}^{\log_c w'_u} m_S(x)dx\\
 &= T + \int_{\log_c z}^{\log_c w'_u} w'_u dx + \sum_{j=0}^{v-u-3}\int_{\log_c c^{-j-1}z}^{\log_c c^{-j}z}c^{-j}zdx\\
 &\qquad  +\int_{\log_c w'_{v}}^{\log_c c^{u+2-v}z} c^{u+2-v}z dx
\end{align*}
where $T$ is a function that does not depend on $z$. Furthermore, we have
\begin{align*}
\frac{\partial B_S}{\partial z} &= -w'_u \frac{1}{z\ln c} + \frac{(1-c^{u+2-v})}{1-c^{-1}}\\
&\qquad +c^{u+2-v}(u+2-v+\frac{1 + \ln z}{\ln c} - \frac{\ln w'_{v}}{\ln c})
\end{align*}

Notice that $\partial B_S/\partial z$ is monotonically increasing so the maximum of $B_S$ is achieved at an extreme point, which is either $w'_u, c^{v-u-2}w'_{v}, w'_u/c, c^{v-u-1}w'_{v}$. When $z=w'_u$ or $z=c^{v-u-2}w'_{v}$, there are 2 jobs of the same weight and we can remove one without changing $B_S$. If
$z = w'_u/c$, the first condition in the lemma holds.
If $z = c^{v-u-1}w'_v$, then the second condition ($w_{v-1}/w_{v} = c$) holds as $z = c^{v-u} w_{v-1} $ by the assumptions of the lemma. Thus, the conclusion follows from an inductive argument on the number of jobs.
\end{proof}

By the above claims, the only sequences we need to consider to prove
Lemma~\ref{lem:9} are of the form 
$$u_0, u_0/c, \ldots, u_0/c^m, v/c^{m+1}, v/c^{m+2}, \ldots$$
where $u_0 \le v$, i.e., all pairs of consecutive jobs have ratio exactly $c$ except for possibly one pair. Thus it holds that
\begin{align*}
B_S &= \int_{\log_c u_0}^1 u_0 dx +\sum_{i=1}^{m}\int_{0}^1 u_0 c^{-i}dx \\
&\qquad +\int_{\log_c v}^{\log_c u_0} u_0 c^{-m}dx+ \sum_{i=m+1}^{\infty} \int_0^1 v c^{-i} dx\\
&=(1-\log_c u_0)u_0 + \frac{(1-c^{-m})u_0}{c-1}\\
&\qquad +u_0 c^{-m}\log_c (u_0/v)+\frac{v c^{-m}}{c-1}
\end{align*}
Notice that $\frac{\partial B_S}{\partial v} = -\frac{u_0 c^{-m}}{v \ln c} + \frac{c^{-m}}{c-1}$ is monotonically increasing so the choice of $v$ maximizing $B_S$ is either $v=u_0$ or $v=c$. The following lemma proves that the value of $B_S$ when $v = u_0$ is larger
than the value of $B_S$ when $v = c$. Thus $B_S$ is maximized when $v = u_0$, i.e., {\em all} pairs of
consecutive jobs less than $u_0$ have ratio exactly $c$.
%CLAIM
\begin{claim}
$$\frac{u_0}{c-1} \ge u_0 \log_c (u_0/c) + \frac{c}{c-1}~\forall u_0\in[1,c]$$
\end{claim}
\begin{proof}
  Let $f(x) = \frac{x}{c-1} - x\log_c (x/c) + \frac{c}{c-1}$. We have $f(1)\ge0$ and $f(c) \ge 0$. Also notice that
$$f'(x) = \frac{1}{c-1} - \frac{\ln x + 1}{\ln c} $$
is monotonically decreasing in $x$ so the minimum of $f(x)$ is achieved at the extreme points. In other words, $f(x) \ge 0~\forall x\in[1,c]$.
\end{proof}

It follows that $B_S = (1-\log_c u_0)u_0 +\frac{u_0}{c-1}$. Thus, we need to show
for $\alpha = h(c)$
\begin{equation}
\alpha\left( (1-\log_c u_0)u_0 +\frac{u_0}{c-1} \right) \le (1-\alpha) \left( \frac{c-u_0}{\ln c} + u_0 \log_c u_0 \right)
\end{equation}
or equivalently
\begin{equation}
\alpha\left(\frac{u_0 c}{c-1} +\frac{c-u_0}{\ln c}\right) \le \frac{c-u_0+u_0 \ln u_0}{\ln c} \label{eq:ratio-all1}
\end{equation}
Let $\beta = \frac{c\ln c}{c-1}-1$. We can rewrite the above inequality as
$$\alpha(\beta u_0 +c) \le c-u_0+u_0 \ln u_0$$

\begin{claim}
For $\alpha = h(c)$,
$$\alpha(\beta u_0 +c) \le c-u_0+u_0 \ln u_0$$
\end{claim}
\begin{proof}
Let $f(u_0) = c-u_0+u_0\ln u_0$ and $g(u_0) = \beta u_0+c$. 
We need to show that $f(u_0)/g(u_0) \ge \alpha$.
Note that $(f(u_0)/g(u_0))' = 0$ if and only if
\begin{align*}
0 &= f'(u_0)g(u_0) - g'(u_0)f(u_0)\\
 &= \ln u_0 (\beta u_0+c) - \beta(c-u_0+u_0\ln u_0)\\
 &= c\ln u_0 -\beta c + \beta u_0 
\end{align*}
The above expression is monotonically increasing in $u_0$. We show next that it
has a root. Thus, $f(u_0)/g(u_0)$ is minimized at this root. Let $y = u_0 \beta / c$. The above equation is equivalent to
$$\beta +\ln(\beta/c)= \ln y + y$$
The solution of this equation is $y = W\left(\frac{\beta e^\beta}{c}\right)$, where $W$ is the Lambert W function. Thus, $u_0 = \frac{c}{\beta}W\left(\frac{\beta e^\beta}{c}\right)$. Substituting the identity $\ln u_0 = \beta - \beta u_0/c$ into $f(u_0)/g(u_0)$, we get
\begin{align*}
\frac{f(u_0)}{g(u_0)} &= \frac{c - u_0 + u_0(\beta-\beta u_0/c)}{\beta u_0+c} = 1-\frac{u_0}{c} \\
&= 1-\frac{1}{\beta}W\left(\frac{\beta e^\beta}{c}\right) = h(c)
\end{align*}

\ 
\end{proof}
Thus, inequality (\ref{eq:ratio-all1}) holds for $\alpha = h(c)$. This completes the proof of Lemma~\ref{lem:9}.
\end{proof}
\section{Upper bound for deterministic algorithms} \label{sec:detlb}
To prove an upper bound of $a$, we construct an instance such that any deterministic algorithm has competitive ratio at most $a$
for some prefix of the request sequence.
The instance has one {\em fast} machine of speed $r>1$ and $n$ {\em slow} machines of speed $1$.
The request sequence has non-decreasing job sizes satisfying a certain oscillatory recurrence relation.

\begin{theorem}\label{thm:detlb}
The competitive ratio of any deterministic algorithm is at most $(\sqrt{5}-1)/2+\eps \approx 0.618034 +\eps$
for any $\eps > 0$.
\end{theorem}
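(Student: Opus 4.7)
The plan is to construct an explicit oblivious request sequence on an instance with one \emph{fast} machine of speed $r>1$ and $n$ \emph{slow} machines of speed $1$ (with $n$ taken large enough that boundary effects contribute at most $O(\eps)$ to the ratio), and to show that no deterministic algorithm can keep $A_i/OPT_i$ above $(\sqrt{5}-1)/2+\eps$ on every prefix. The request sequence $w_1<w_2<\cdots$ will be generated by a linear recurrence of the form $w_{i+1}=a\,w_i+b\,w_{i-1}$, with $a,b$ and $r$ tuned so that the characteristic polynomial $x^2-ax-b$ has a pair of complex conjugate roots; initial conditions will be chosen to keep the sequence strictly positive and non-decreasing. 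This is the ``oscillatory'' feature: the closed form is a growing exponential modulated by a sinusoid in $i$, and the adversary will stop at a step where the sinusoid hurts the algorithm most.

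First I would record some structural simplifications. At each step $i$, write $A_i=F_i+S_i$, where $F_i=r\cdot\max\{w_j:j\in J_F,\,j\le i\}$ is the fast machine's contribution and $S_i$ is the sum over slow machines of their current maxima. The offline optimum is $OPT_i=r w_i+w_{i-1}+\cdots+w_{i-n}$ for $i>n$. Without loss of generality, the algorithm fills the $n$ slow machines one at a time (never putting a smaller job on a non-empty slot while an empty one exists), so the algorithm's state is fully captured by the binary decision (fast or slow) made on each arriving job.

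Next I would analyze the two extremal strategies and use them to pin down the parameters $a,b,r$. Under ``always-fast'' (every $w_i$ goes to the fast machine, discarding its predecessor there) we have $A_i=r w_i$ and the ratio is $r w_i/OPT_i$; under ``always-slow'' (once full, new jobs displace slow minima) we have $A_i=$ sum of the top $n$ slow jobs, giving a different ratio. I would set up the algebra so that both of these extremal ratios converge to the same value $\phi=(\sqrt{5}-1)/2$. The requirement that these two ratios coincide at $\phi$ forces the coefficients into a regime where the discriminant $a^2+4b$ is negative, which is exactly the oscillatory case. The appearance of $\phi$ (whose minimal polynomial $x^2+x-1$ has the same shape as the characteristic polynomial we use) is what makes the bound $(\sqrt{5}-1)/2$ rather than something larger such as $1-1/e$.

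The crux is handling arbitrary ``mixed'' strategies, since there are exponentially many binary decision sequences. Let $i_t$ be the last step at which the algorithm touched the fast machine; then $A_i=r w_{i_t}+\sum_{j\in\sigma}w_j$ where $\sigma\subseteq\{1,\dots,i\}\setminus\{i_t\}$ is the multiset of slow placements currently surviving. I would use the recurrence to rewrite each $w_j$ in terms of $w_{i-1},w_{i-2}$, which collapses $A_i/OPT_i$ to an expression whose deviation from the corresponding extremal-strategy ratio decays like $\rho^{-(i-i_t)}$, where $\rho>1$ is the modulus of the complex roots. Consequently, any mixed strategy either keeps $i_t$ close to $i$ (behaving like ``always fast'' for recent steps) or lets $i-i_t$ grow (in which case the slow contribution dominates and it behaves like ``always slow''). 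In either case there is a choice of stopping time at which the sinusoidal factor puts the ratio within $\eps$ of $\phi$.

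The main obstacle is this last reduction: rigorously showing that no clever interleaving of fast and slow placements can evade the oscillation. The nonlinearity introduced by the $\max$ in the fast machine's contribution is what makes this delicate, and the proof will have to carefully case-split on whether each new job is placed on fast (resetting $F_i$ upward by a factor depending on $w_i/w_{i_t}$) or on slow (increasing $S_i$ but letting $F_i$ decay relative to $OPT_i$). The oscillatory recurrence is precisely the tool that lets the adversary keep the ratio bounded away from above $\phi$ regardless of how the algorithm splits its placements, and verifying positivity/monotonicity of $w_i$ in the presence of a complex root (the ``delicate choice of parameters'' mentioned earlier) is what ultimately yields the claimed bound $(\sqrt{5}-1)/2+\eps$.
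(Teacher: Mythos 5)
Your setup (one fast machine of speed $r$, $n$ unit-speed slow machines, increasing job sizes generated by a linear recurrence with complex conjugate roots, adversary stops at a prefix) matches the paper's construction. But the heart of your argument --- the reduction of arbitrary ``mixed'' strategies to the two extremal ones via a claimed $\rho^{-(i-i_t)}$ decay --- is a genuine gap, and it is solving a problem you do not need to solve. The correct move is to have the adversary terminate the request sequence \emph{immediately after the first job the algorithm places on a slow machine}. If that first slow placement happens at step $k$, then all of $w_0,\dots,w_{k-1}$ sit on the fast machine (credit $r\,w_{k-1}$, since the sizes are increasing) and $w_k$ sits on one slow machine (credit $w_k$), so the algorithm's value on that prefix is exactly $r\,w_{k-1}+w_k$ against an optimum of $r\,w_k+\sum_{j<k}w_j$; one inequality per $k$ suffices. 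The only strategy that survives to the end of the sequence is ``all jobs on the fast machine,'' which is killed by the single terminal condition $r\,w_{n-1}+w_n\ge r\,w_n$, i.e.\ $w_n/w_{n-1}\le r/(r-1)$. There are no exponentially many interleavings to analyze, and no need for your decay estimate --- which, as sketched, is not justified: after a mixed history the algorithm's value is $r\,w_{i_t}+\sum_{j\in\sigma}w_j$ with $\sigma$ essentially arbitrary, and nothing in your outline controls this quantity uniformly in the presence of the $\max$ on the fast machine.

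Once you adopt the ``stop at the first slow placement'' adversary, the role of the complex roots also becomes sharper than ``the sinusoid hurts the algorithm most.'' Imposing the per-step inequalities with equality yields the recurrence $(ar-1)w_k-(a+1)(r-1)w_{k-1}+r\,w_{k-2}=0$; if its roots were real, the ratio $w_k/w_{k-1}$ would converge monotonically to the dominant root and could remain above $r/(r-1)$ forever, so the all-fast strategy would never be punished and the construction would fail. Choosing the roots to be $z\pm i\sqrt{\delta}$ makes $w_k$ a growing exponential modulated by a cosine, which forces $w_n/w_{n-1}$ to drop below $r/(r-1)$ at some finite $n$ (while one must verify the sequence stays non-decreasing up to that point). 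Minimizing the resulting bound $a$ over $z$ as $\delta\to 0$ gives $z=(1+\sqrt{5})/2$ and $a=(\sqrt{5}-1)/2$; your heuristic that the golden ratio arises from equating two extremal asymptotic ratios is not how the constant actually emerges. As written, your proposal hinges on the unproved mixed-strategy lemma, and I see no reason to believe it holds in the stated generality.
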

\begin{proof}
Our construction to establish this bound uses parameters $r>1$ and $a<1$ that we will fix later. 
We construct an instance such that any deterministic algorithm has competitive ratio at most $a$
for some prefix of the request sequence.
The instance has one {\em fast} machine of speed $r>1$ and $n$ {\em slow} machines of speed $1$.
The request sequence has non-decreasing job sizes:
$1=w_0 \leq w_1 \leq \ldots \leq w_{n-1} \leq w_n$.

The instance will satisfy the following properties:
\begin{align}
a \cdot r &\geq 1 \label{first_cond}\\
\forall k=1 \ldots n, \mbox{\ \ \ \ } a \cdot (r \cdot w_k + \sum_{j=0}^{i-1} w_j) &\geq r \cdot w_{k-1} + w_k \label{mid_cond}\\ 
a \cdot (r \cdot w_n + \sum_{j=0}^{n-1} w_j) &\geq r \cdot w_n \label{last_cond}
\end{align}

\begin{lemma}
Properties (\ref{first_cond}) - (\ref{last_cond}) imply that any deterministic algorithm has competitive ratio at most $a$.
\end{lemma}
\begin{proof}
Consider the action of a deterministic algorithm $A$ on the instance we construct.
If $w_0$ is placed on a slow machine, (\ref{first_cond}) implies that the competitive ratio is at most $a$.
For $k=1,\ldots n$, note that the optimum value for the prefix of the request sequence ending at $w_k$ is 
$r \cdot w_k +  \sum_{j=0}^{i-1} w_j$.
We consider two cases:

\noindent{\bf Case 1:} Suppose $A$ places some job on a slow machine.
Let $k$ be the first job placed on a slow machine. The the value of $A$ for the prefix ending at $w_i$
is $r \cdot w_{k-1} + w_k$.
Now (\ref{mid_cond}) implies that the competitive ratio is at most $a$.

\noindent{\bf Case 2:} All jobs are placed on the fast machine.
Then, once the entire sequence is processed upto $w_n$, (\ref{last_cond}) implies that the competitive ratio 
is at most $a$.
\end{proof}

In order to produce the instance, we focus on satisfying the following properties instead, which imply the previous properties (\ref{first_cond}) - (\ref{last_cond}).

\begin{align}
a \cdot r &\geq 1 \label{first_cond2}\\
a \cdot (r \cdot w_1 + w_0) &= r \cdot w_0 + w_1 \label{init_cond2}\\
\forall i=2 \ldots n, \quad a \cdot (r \cdot w_k - r\cdot w_{k-1} + &w_{k-1})  \notag\\
= r \cdot w_{k-1} + w_k  - (&r \cdot w_{k-2} + w_{k-1} )\label{mid_cond2}\\ 
r \cdot w_{n-1} + w_n &\geq r \cdot w_n \label{last_cond2}
\end{align}

Note that (\ref{init_cond2}) and (\ref{mid_cond2}) together imply (\ref{mid_cond}) with equality.
Further, (\ref{last_cond2}) and (\ref{mid_cond2}) for $k=n$ implies  (\ref{last_cond}).

Next, we rewrite  (\ref{mid_cond2}) as a recurrence relation for the sequence $\{ w_k \}$.
\begin{align}
(a \cdot r -1) w_k - (a+1)(r-1) w_{k-1} + r \cdot w_{k-2} = 0 \label{recurrence}
\end{align}
Note that the initial conditions are $w_0=1$ and from (\ref{init_cond2}), $w_1 = (r-a)/(a\cdot r-1)$.

Let $\delta > 0$ be a sufficiently small constant. 
\begin{lemma}\label{claim:roots}
\begin{align*}
\text{For\ } a = \frac{1+\sqrt{5+12\delta+4\delta^2}}{3+\sqrt{5}+2\delta} \text{\ and\ }
r = \frac{1+\sqrt{5+12\delta+4\delta^2}}{3-\sqrt{5}+2\delta},
\end{align*}
the roots of the characteristic equation of recurrence relation (\ref{recurrence}) are
$\displaystyle \frac{1+\sqrt{5}}{2} \pm i \sqrt{\delta}$
\end{lemma}
\begin{proof}
We will choose parameters $a,r$ such that the roots of the characteristic equation of the recurrence relation 
(\ref{recurrence}) are of the form $z \pm i \sqrt{\delta}$ for some small $\delta$.
Note that $i$ here is the complex square root of $-1$.
The reason for this choice of roots will become clear later.

First, we derive relationships between $a,r$ and $z,\delta$. Using the standard formula for the roots of a quadratic equation, we get
\begin{align}
\frac{(a+1)(r-1)}{2(r \cdot a-1)} &= z \label{root1}\\
\left(\frac{(a+1)(r-1)}{2(r \cdot a-1)}\right)^2 - \frac{r}{r\cdot a - 1} &= -\delta \label{root2}\\
(\ref{root2}) \Rightarrow  \frac{r}{r\cdot a - 1} &= z^2+\delta \label{root3}\\
(\ref{root1}) \Rightarrow \frac{r-a}{r\cdot a - 1} &= 2z-1  \label{root4}\\
(\ref{root3}) - (\ref{root4}) \Rightarrow  \frac{a}{r\cdot a - 1} &= (z-1)^2+\delta \label{root5}
\end{align}
An easy calculation shows that the system of equations
$$\frac{r}{r\cdot a - 1} = x, \ \ \ \ \ 
\frac{a}{r\cdot a - 1} = y$$
has the solution
$\displaystyle a = \frac{1+\sqrt{1+4xy}}{2x}$,
$\displaystyle r = \frac{1+\sqrt{1+4xy}}{2y}$.
Substituting $x=z^2+\delta$ and $y=(z-1)^2+\delta$, we get
\begin{align*}
a &= \frac{1+\sqrt{1+4(z^2+\delta)((z-1)^2+\delta)}}{2(z^2+\delta)}\\
r &= \frac{1+\sqrt{1+4(z^2+\delta)((z-1)^2+\delta)}}{2((z-1)^2+\delta)}
\end{align*}
Recall that $a$ is the upper bound on the competitive ratio that we establish and we would like to minimize $a$ to get the best bound possible. In fact, $a$ is minimized for $z=(1+\sqrt{5})/2$ and $\delta=0$ for which we get $a=(\sqrt{5}-1)/2$.
In fact, our construction will need $\delta > 0$, but we fix $z=(1+\sqrt{5})/2$ to minimize the upper bound $a$ we obtain from this argument. For this value of $z$ we get\\
%\begin{align*}
$\displaystyle a = \frac{1+\sqrt{5+12\delta+4\delta^2}}{3+\sqrt{5}+2\delta} \mbox{\ \ \ }
r = \frac{1+\sqrt{5+12\delta+4\delta^2}}{3-\sqrt{5}+2\delta}$.
%\end{align*}
\end{proof}

Note that the upper bound $a$ on the competitive ratio is of the form $(\sqrt{5}-1)/2+\eps$ with $\eps$ a suitable function of  $\delta$ (and $\eps \rightarrow 0$ as $\delta \rightarrow 0$).

\begin{claim}
For $a$ and $r$ chosen as in Lemma~\ref{claim:roots}, 
the term $w_k$ of the solution to the recurrence relation is given by
\begin{align}
w_k = &\left(\frac{1}{2}- \frac{\sqrt{5}-1}{4\sqrt{\delta}}\cdot i \right) \left(\frac{1+\sqrt{5}}{2} + i \sqrt{\delta}\right)^k \notag \\ 
&\quad + \left(\frac{1}{2}+\frac{\sqrt{5}-1}{4\sqrt{\delta}}\cdot i \right)  \left(\frac{1+\sqrt{5}}{2} - i \sqrt{\delta}\right)^k 
\label{rec_soln}
\end{align}
\end{claim}
\begin{proof}
A general term $w_k$ of the sequence is given by the following expression:
\begin{align*}
(b+c \cdot i) \left(\frac{1+\sqrt{5}}{2} + i \sqrt{\delta}\right)^k + (d+e \cdot i) \left(\frac{1+\sqrt{5}}{2} - i \sqrt{\delta}\right)^k 
\end{align*}
where $b,c,d,e$ are real numbers to be determined.
Recall the initial conditions $w_0 = 1$ and  $w_1 = (r-a)/(r \cdot a -1) = 2z-1 = \sqrt{5}$. This gives the following equations:
\begin{align*}
(b+c \cdot i) \cdot 1 + (d + e \cdot i) \cdot 1 &= 1\\
\begin{split}
 (b+c \cdot i) \left(\frac{1+\sqrt{5}}{2} + i \sqrt{\delta}\right) \\
  + (d+e \cdot i) \left(\frac{1+\sqrt{5}}{2} - i \sqrt{\delta}\right) 
  &= \sqrt{5}
\end{split}
\end{align*}
Solving these, we get $b=d=1/2$, $c = -(\sqrt{5}-1)/4\sqrt{\delta}$ and $e = (\sqrt{5}-1)/4\sqrt{\delta}$.
Hence the term $w_k$ of the sequence is given by\\
%\begin{align*}
%\begin{split}
$\displaystyle \left(\frac{1}{2}- \frac{\sqrt{5}-1}{4\sqrt{\delta}}\cdot i \right) \left(\frac{1+\sqrt{5}}{2} + i \sqrt{\delta}\right)^k$\\
$\displaystyle + \left(\frac{1}{2}+\frac{\sqrt{5}-1}{4\sqrt{\delta}}\cdot i \right)  \left(\frac{1+\sqrt{5}}{2} - i \sqrt{\delta}\right)^k$
%\label{rec_soln}
%\end{split}
%\end{align*}
\end{proof}

\begin{lemma}
For $a$ and $r$ chosen as in Lemma~\ref{claim:roots} and $\delta>0$ sufficiently small,
the solution $w_k$ of the recurrence relation satisfies the conditions 
(\ref{first_cond2})-(\ref{last_cond2}).
\end{lemma}
\begin{proof}
The choice of recurrence relation (\ref{recurrence}) ensures that (\ref{mid_cond2}) is satisfied,
and the initial condition for $w_1$ ensures that (\ref{init_cond2}) is satisfied.
For $\delta=0$, $a \cdot r = (6+2\sqrt{5})/4$.
Hence for $\delta$ sufficiently small, $a \cdot r \geq 1$ and hence (\ref{first_cond2}) is satisfied.

Rewriting condition (\ref{last_cond2}) , we need to show that $w_n / w_{n-1} \leq r/(r-1)$.
Note that $r=\frac{1+\sqrt{5}}{3-\sqrt{5}} + O(\delta)$ and $r/(r-1) \ge (3+\sqrt{5})/4 - O(\delta) \ge 1.309$.
Examining the solution (\ref{rec_soln}) of the recurrence, we see that $w_k$ is twice the real part of the
first term in (\ref{rec_soln}).
In other words, $w_k$ is of the form
\begin{align}
w_k &= 2 \Re \left( (r_1 e^{i \Phi_1})(r_2 e^{i \Phi_2})^k \right)
\end{align}
where 
$\displaystyle \Phi_1 = \tan^{-1}\left( -\frac{\sqrt{5}-1}{2\sqrt{\delta}} \right)$,
$\displaystyle \Phi_2 = \tan^{-1}\left( \frac{2\sqrt{\delta}}{1+\sqrt{5}}\right)$, and $r_2 = \frac{1+\sqrt{5}}{2}+O(\delta)$. 
As $\delta \rightarrow 0$, 
$$\Phi_1 = -\frac{\pi}{2} + \frac{2\sqrt{\delta}}{\sqrt{5}-1} - O(\delta^{3/2})$$
and
$\displaystyle \Phi_2 = \frac{2\sqrt{\delta}}{1+\sqrt{5}} - E$ for $E=O(\delta^{3/2})$.
Note that
\begin{align}
w_k &= 2 r_1 (r_2)^k cos (\Phi_1 + k \Phi_2)\\
 &= 2 r_1 \left(\frac{1+\sqrt{5}}{2}+O(\delta)\right)^k \notag \\
&\cos \left(-\frac{\pi}{2} +\frac{2\sqrt{\delta}}{\sqrt{5}-1} + k \frac{2\sqrt{\delta}}{1+\sqrt{5}} -O(\delta^{3/2}) - kE\right)
\end{align}
\begin{align}
&\frac{w_n}{w_{n-1}} = \left(\frac{1+\sqrt{5}}{2}+O(\delta)\right)\notag\\
&\frac{\cos \left( -\frac{\pi}{2} +\frac{2\sqrt{\delta}}{\sqrt{5}-1} + n \frac{2\sqrt{\delta}}{1+\sqrt{5}} -O(\delta^{3/2}) - nE\right)}
{\cos \left( -\frac{\pi}{2} +\frac{2\sqrt{\delta}}{\sqrt{5}-1} + (n-1) \frac{2\sqrt{\delta}}{1+\sqrt{5}} -O(\delta^{3/2}) - (n-1)E\right) }
\end{align}
Note that we hope to show $w_n / w_{n-1} \le 1.309 \le (3+\sqrt{5})/4-O(\delta)$.
The term $(1+\sqrt{5})/2+O(\delta) < 1.619$, so it is critical that the ratio of cosines be small enough
to give us the condition we want (but not too small to ensure that $w_n/w_{n-1} \geq 1$).

First we show that the smallest ratio of consecutive terms is $w_n/w_{n-1}$. Consider\\
$f(x) = \frac{\cos(x+\Phi_2)}{\cos(x)} = \cos(\Phi_2) - \tan(x)\sin(\Phi_2)$.\\ 
We have $f'(x) = -\frac{\sin(\Phi_2)}{\cos^2(x)} < 0$ so the ratio gets smaller as $n$ increases, as long as $\Phi_1 + (n-1)\Phi_2 < \pi/2$.

Next we show there exists $n$ such that \\
$13/20 \le \frac{\cos(\Phi_1+n\Phi_2)}{\cos(\Phi_1+(n-1)\Phi_2)} \le 3/4$. For $n = 1$, the ratio is greater than 1 for sufficiently small $\delta$. For the smallest $n$ such that $\Phi_1 + n\Phi_2 \ge \pi/2$, the ratio is smaller than 0. Furthermore, if $\frac{\cos(\Phi_1+n\Phi_2)}{\cos(\Phi_1+(n-1)\Phi_2)} > 3/4$ then 
\begin{align*}
&\frac{\cos(\Phi_1+(n+1)\Phi_2)}{\cos(\Phi_1+n\Phi_2)} \\
&\qquad = \frac{2\cos(\Phi_1+n\Phi_2)\cos(\Phi_2) - \cos(\Phi_1+(n-1)\Phi_2)}{\cos(\Phi_1+n\Phi_2)}\\ 
&\qquad \ge 2\cos(\Phi_2) - \frac{4}{3} \ge 13/20 \qquad \text{if $\cos(\Phi_2) \ge \frac{119}{120}$}
\end{align*}
This is true for sufficiently small $\delta$.
\end{proof}
\end{proof}
\section{Upper bound for randomized algorithms}\label{sec:randlb}
To establish the bound for randomized algorithms, we use Yao's principle and show an upper bound on the expected competitive ratio of any deterministic algorithm on a distribution of instances. The construction uses one fast machine of speed 1 and $n$ slow machines of speed 1/4. The request sequence has non-decreasing sizes $2^i$. The prefix of this sequence ending with size $2^i$ is presented to the algorithm with probability $c/2^i$, where $c$ is a normalizing constant. We show that the best algorithm for this sequence achieves at most $cn+1$ while the optimal algorithm achieves roughly $5nc/4$. 
\begin{theorem}\label{thm:randlb}
The competitive ratio of any randomized algorithm against an oblivious adversary is at most $0.8+\eps$
for any $\eps > 0$.
\end{theorem}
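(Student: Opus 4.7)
My plan is to apply Yao's principle: to upper bound the competitive ratio of any randomized algorithm against an oblivious adversary, it suffices to exhibit a distribution on inputs for which no deterministic algorithm achieves expected ratio better than $4/5 + \eps$. Using the construction sketched in the theorem statement, I take one fast machine of speed $1$, $n$ slow machines of speed $1/4$, and the job sequence $2^0, 2^1, \ldots, 2^n$, where the adversary reveals only the prefix ending at $2^i$ with probability $p_i = c/2^i$ and $c = 1/\sum_{j=0}^n 2^{-j}$ (so $c \to 1/2$). For the prefix ending at $2^i$, the offline optimum places $2^i$ on the fast machine and $2^0, \ldots, 2^{i-1}$ on distinct slow machines, giving value $2^i + (2^i - 1)/4$; summing against $p_i$ yields $\E[\mathrm{OPT}] = (5c/4)(n+1) - 1/4$.

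The central step is to prove $\E[\mathrm{ALG}] \leq c(n+1)$ for every deterministic algorithm. The key identity is a per-placement decomposition of $\E[\mathrm{ALG}]$: when job $2^k$ is placed on a machine of speed $s$, the value $s \cdot 2^k$ persists on that machine from time $k$ through time $k + g - 1$, where $g$ is the number of steps until the next placement on the same machine (taking $g = n+1-k$ if there is no later placement). Summing that value against the stopping probabilities $c/2^i$ telescopes to $2 c s (1 - 2^{-g})$, giving
\[
\E[\mathrm{ALG}] = \sum_{\text{placements}} 2c\, s_m\, (1 - 2^{-g}).
\]

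I would bound this sum by a two-stage exchange argument. First, if a slow machine holds two or more placements, moving the earliest of them to a fresh empty slow machine lengthens all the relevant gaps and hence only increases the sum, so I may assume each slow machine holds at most one placement. Second, if a slot $k$ currently lies on a distinct slow machine with nearest fast placements at $l < k$ and $j > k$ (using conventions $l = -\infty$ and $j = n+1$ at the boundaries), converting $k$ to a fast placement changes the sum by exactly
\[
2c(1 - 2^{-(k-l)})(1 - 2^{-(j-k)}) - (c/2)(1 - 2^{-(n+1-k)}),
\]
which is nonnegative because $4(1-2^{-a})(1-2^{-b}) \geq 1 \geq 1 - 2^{-(n+1-k)}$ for integers $a, b \geq 1$; the missing-$l$ and missing-$j$ boundary cases yield similar inequalities (dropping the absent factor, and using either $3/2 - 2 \cdot 2^{-a} > 0$ or $3c/2 > 0$). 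Iterating drives the configuration to ``all fast,'' where each gap equals $1$ and each placement contributes exactly $c$, yielding the bound $c(n+1)$.

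Combining the two bounds, $\E[\mathrm{ALG}]/\E[\mathrm{OPT}] \leq c(n+1)/\bigl((5c/4)(n+1) - 1/4\bigr)$, which tends to $4/5$ as $n \to \infty$; for any $\eps > 0$, choosing $n$ sufficiently large gives ratio at most $0.8 + \eps$. The main obstacle is the exchange inequality: its nonnegativity depends crucially on the factor-of-four ratio between the fast and slow speeds, which is precisely what makes $4(1-2^{-a})(1-2^{-b}) \geq 1$ hold.
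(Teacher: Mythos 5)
Your proof is correct, and while the overall frame (Yao's principle, one speed-$1$ machine versus $n$ speed-$1/4$ machines, geometric jobs with prefix probabilities $\propto 2^{-i}$) matches the paper, your treatment of the key step --- bounding the deterministic algorithm's expected value --- is genuinely different. The paper rewrites the algorithm's value as a telescoping sum $\sum_j x_j$ indexed by jobs, sets up a dynamic program $f(t,j)$ over the index of the largest job currently on the fast machine, and proves $f(i,i)\le ci+1$ by induction with a case analysis on $m_{t-1}$. You instead derive the closed-form per-placement contribution $2cs(1-2^{-g})$ and run a local exchange argument, using the identity $(1-2^{-(j-k)})+(1-2^{-(k-l)})-(1-2^{-(j-l)})=(1-2^{-(j-k)})(1-2^{-(k-l)})$ to show that converting any slow placement to a fast one can only increase the value, so the all-fast configuration of value $c(n+1)$ is extremal. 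Your route is arguably more transparent: it isolates exactly where the speed ratio $4$ enters (via $4(1-2^{-a})(1-2^{-b})\ge 1$), whereas the paper's induction buries this in the case analysis. Two small points worth making explicit if you write this up: (i) your first exchange step may require more than $n$ "fresh" slow machines, so you should say you are bounding the maximum of $\sum 2cs_m(1-2^{-g})$ over a relaxed configuration class with unlimited slow machines, which is harmless for an upper bound; and (ii) your indexing starts at $2^0$ with $n+1$ jobs and $c\to 1/2$, versus the paper's $2^1$ with $n$ jobs and $c\to 1$, which is why your final expressions differ superficially while both tend to $4/5$.
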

\begin{proof}
In order to establish the bound for randomized algorithms,
we use Yao's principle and show an upper bound on the expected competitive ratio of 
any deterministic algorithm on a distribution of instances.
The construction uses a set of machines with one {\em fast} machine of speed 1 and
$n$ {\em slow} machines of speed $1/4$.
The request sequence has non-decreasing sizes $w_i = 2^i$ for $i = 1, \ldots, n$.
Our construction uses a probability distribution over prefixes of this sequence:
the prefix ending at $w_i$ is presented to the algorithm with probability $p_i = c/2^i$,
where the normalizing constant $c =1/(1-1/2^n)$.

Let $OPT_i$ denote the optimal solution for the length $i$ prefix of the input.
It is easy to see that $OPT_i$ places $w_i$ on the fastest machine and jobs
$w_1, \ldots, w_{i-1}$ on the slow machines.
Hence $OPT_i = 2^i + (2^i-1)/4 = (5/4)2^i - 1/4$.
We will compute the expected value of the optimal solution for the distribution on
inputs specified above.
\begin{align*}
\sum_{i=1}^n \frac{c}{2^i} OPT_i &= c \sum_{i=1}^n \frac{(5/4)2^i - 1/4}{2^i} 
= c \sum_{i=1}^n \frac{5}{4} - \frac{1}{4\cdot 2^i} \\
&= \left(1-\frac{1}{2^n}\right)^{-1}\left(\frac{5n}{4} - \frac{1}{4}(1-1/2^n)\right)
\end{align*}

Next we compute the expected value of the best deterministic algorithm on this distribution. Notice that for the setting we specified, a deterministic algorithm is completely specified by $n$ choices of whether to put $w_i$ on the fast machine or to put it on an unoccupied slow machine. Let $c_i$ be the indicator variable of whether the algorithm puts $w_i$ on the fast machine. Let $a_1 \le \cdots \le a_k$ be the indices of the jobs the algorithm puts on the fast machine. Let $m_i$ be the maximum $a_l$ such that $a_l \le i$. In other words, $m_i$ is the index of the largest job on the fastest machine if the sequence of jobs ends at the $i$th job. For notational convenience, assume that we already put a job of size $w_0 = 0$ on the fast machine and $a_0 = 0$. The expected value of the algorithm is
\begin{align*}
&\sum_{i=1}^n p_i \left(w_{m_i} + \sum_{j:j \not\in \{a_1, \cdots, a_k\}\wedge j \le i} \frac{w_j}{4}\right)\\ 
&\qquad = \sum_{i=1}^n p_i \left(\sum_{j:a_j\le i}(w_{a_j}-w_{a_{j-1}}) + \sum_{j\textnormal{: }\bar{c_j}\wedge j \le i} \frac{w_j}{4}\right)\\
&\qquad =\sum_{j=1}^n \left(c_j(w_j - w_{m_{j-1}}) + (1-c_j)\frac{w_j}{4}\right)\sum_{i\ge j}p_i\\
&\qquad =\sum_{j=1}^n x_j
\end{align*}
where $x_j = \left(c_j(w_j - w_{m_{j-1}}) + (1-c_j)\frac{w_j}{4}\right)\sum_{i\ge j}p_i$.
Let $f(t,j)$ be the maximum value of $\sum_{l=1}^t x_l$ over all choices of $c_1, \cdots, c_t$ with the restriction that $m_t = j$. First, if $j < t$ then $c_{j+1}=\cdots =c_{t} = 0$ and we have 
\begin{equation}f(t,j) = f(t-1,j) + \frac{w_t}{4}\sum_{i\ge t}p_i \le f(t-1,j) + \frac{c}{2} \label{eqn:t-ge-j}\end{equation}
Next, if $j = t$ then
\begin{align}
\nonumber f(t,t) &= \max_{m_{t-1}} \left( f(t-1, m_{t-1}) + (w_t - w_{m_{t-1}})\sum_{i\ge t}p_i \right)\\
&\le\max_{m_{t-1}} \left( f(m_{t-1}, m_{t-1}) + (w_t - w_{m_{t-1}})\frac{2c}{2^t} \right. \notag\\
&\left. \qquad\qquad\qquad\qquad\qquad\quad + \frac{(t-1-m_{t-1})c}{2} \right) \label{eqn:t-eq-j}
\end{align}
The above inequality follows from $\sum_{i\ge j}p_i \le \frac{2c}{2^j}$ and Equation~\ref{eqn:t-ge-j}.

We now prove by induction that $f(i,i) \le ci+1$. The base cases $f(0,0) = 0$, $f(1,1) = 2$ and $f(2,2) = 4-c$ are obvious. Assume that the claim holds up to $i=t-1$ and we want to prove it for $i=t$. Consider 3 cases for $m_{t-1}$. First, if $m_{t-1} = t-1$ then $f(t,t) = f(t-1,t-1) + (2^t-2^{t-1})\frac{2c}{2^t} \le c(t-1)+1+c = ct+1$. Next, if $m_{t-1}=t-2$ then $f(t,t) = f(t-2,t-2) + (2^t - 2^{t-2})\frac{2c}{2^t} + c/2 \le ct+1$. Lastly, if $m_{t-1} \le t-3$ then
\begin{align*}
f(t,t) &\le f(m_{t-1}, m_{t-1}) + w_t \frac{2c}{2^t} + (t-1-m_{t-1})c/2\\
&\le m_{t-1}c + 1 + 2c + (t-1-m_{t-1})c/2 \le tc+1
\end{align*}
Thus, we have proved the inductive case. By Equation~\ref{eqn:t-ge-j} and the fact that $f(i,i) \le ci+1$, we have $f(n, j) \le cn+1~\forall j$. However, $\max_j f(n,j)$ is exactly the expected value of the best algorithm so the competitive ratio of any randomized algorithm on the specified instance is at most $\frac{cn+1}{5nc/4 - 1/4}\rightarrow 4/5$ as $n$ goes to infinity.

\end{proof}

\subsubsection{Acknowledgments.} MC was supported by NSF awards CCF 0832797, AF 0916218 and a Google research award. MH's support: The research leading to these results has received funding from the European Research Council 
under the European Union's Seventh Framework Programme (FP/2007-2013) / ERC Grant
Agreement no. 340506 and  the Austrian Science Fund (FWF) grant P23499-N23. HN was supported by NSF awards CCF 0832797, AF 0916218, a Google research award, and a Gordon Wu fellowship.
\bibliographystyle{abbrv}
\bibliography{../matching}
\end{document}